\pgfplotsset{compat=1.10}
\definecolor{linkblue}{HTML}{001487}
\theoremstyle{plain}
\newtheorem{theorem}{Theorem}[section]
\newtheorem{lemma}[theorem]{Lemma}
\newtheorem{corollary}[theorem]{Corollary}
\newtheorem{question}{Question}
\crefname{question}{Question}{Questions}
\theoremstyle{definition}
\newtheorem{definition}[theorem]{Definition}
\newtheorem{remark}[theorem]{Remark}
\newtheorem{example}[theorem]{Example}
\newcommand*{\ee}{\mathrm{e}}
\newcommand*{\cA}{\mathcal{A}}
\newcommand*{\cC}{\mathcal{C}}
\newcommand*{\cD}{\mathcal{D}}
\newcommand*{\cE}{\mathcal{E}}
\newcommand*{\cF}{\mathcal{F}}
\newcommand*{\cG}{\mathcal{G}}
\newcommand*{\cN}{\mathcal{N}}
\newcommand*{\cU}{\mathcal{U}}
\newcommand*{\cV}{\mathcal{V}}
\newcommand*{\cW}{\mathcal{W}}
\newcommand*{\cX}{\mathcal{X}}
\newcommand*{\N}{\mathbb{N}}
\newcommand*{\R}{\mathbb{R}}
\newcommand*{\C}{\mathbb{C}}
\newcommand*{\LO}{\mathrm{LO}}
\newcommand*{\LOCC}{\mathrm{LOCC}}
\newcommand*{\id}{\mathrm{id}}
\newcommand*{\tr}{\mathrm{tr}}
\newcommand*{\ket}[1]{| #1 \rangle}
\newcommand*{\bra}[1]{\langle #1 |}
\newcommand{\proj}[1]{|#1\rangle\!\langle #1|}
\newcommand*{\CPTP}{\mathrm{CPTP}}
\newcommand*{\CPTN}{\mathrm{CPTN}}
\newcommand*{\ag}{\mathfrak{g}}
\newcommand*{\am}{\mathfrak{m}}
\newcommand*{\ak}{\mathfrak{k}}
\newcommand*{\ah}{\mathfrak{h}}
\newcommand*{\ci}{\mathrm{i}} 
\newcommand{\norm}[1]{\left\lVert#1\right\rVert}
\newcommand*{\ox}{\otimes}
\newcommand{\Choi}{\ket{\Phi_U}}
\title{Cutting circuits with multiple two-qubit unitaries}
 \author{\normalsize Lukas Schmitt$^{1,2}$, Christophe Piveteau$^{1}$, and David Sutter$^{2}$}
  \affil{\small $^{1}$Institute for Theoretical Physics, ETH Zurich\\
  $^{2}$IBM Quantum, IBM Research Europe -- Zurich
 }
 \date{}
\begin{document}

\maketitle

\begin{abstract}
    Quasiprobabilistic cutting techniques allow us to partition large quantum circuits into smaller subcircuits by replacing non-local gates with probabilistic mixtures of local gates. The cost of this method is a sampling overhead that scales exponentially in the number of cuts. It is crucial to determine the minimal cost for gate cutting and to understand whether allowing for classical communication between subcircuits can improve the sampling overhead. In this work, we derive a closed formula for the optimal sampling overhead for cutting an arbitrary number of two-qubit unitaries and provide the corresponding decomposition. We find that cutting several arbitrary two-qubit unitaries together is cheaper than cutting them individually and classical communication does not give any advantage.
\end{abstract}

\section{Introduction}
To demonstrate a quantum advantage in the close future, various challenges need to be resolved such as correcting or reducing noise, dealing with the limited connectivity (on some hardware platforms), and scaling the number of available qubits.
The latter obstacle motivated the development of techniques known as \emph{circuit knitting} or \emph{circuit cutting}.
The idea is to split large circuits into subcircuits that can be executed on smaller devices. Partitioning a circuit can be achieved by performing space-like or time-like cuts, which are typically referred to as \emph{gate cuts}~\cite{Mitarai_2021,MF_21,piv23} and \emph{wire cuts}~\cite{PHOW20,BPS23,harada2023doubly,EP23}, respectively.
Circuit cutting techniques typically result in a sampling overhead that scales exponentially in the number of cuts.
One particular method of circuit knitting is based on a technique called \emph{quasiprobability simulation}.
This method works by probabilistically replacing the non-local gates across the various subcircuits by local operations.
By performing appropriate classical post-processing, the expectation value of the original large circuit can be retrieved.
More concretely, this means that we can simulate the large quantum circuit by only physically executing the small subcircuits on small quantum devices.

While quasiprobabilistic circuit cutting allows us to retrieve the expectation value of the larger circuit with arbitrary precision, the number of shots required to achieve a desired accuracy is increased compared to physically running the original circuit.
Suppose we cut a non-local unitary $U$, then the optimal achievable sampling overhead is characterized by the quantity $\gamma(U)$ which we call the $\gamma$-factor of $U$.
More precisely, the number of shots increases multiplicatively by $\gamma(U)^2$.
If we separately apply the quasiprobability simulation technique to $n$ different gates $(U_i)_{i=1}^n$ in the circuit, then the total sampling overhead behaves multiplicatively $\prod_{i=1}^n \gamma(U_i)$.
This is the reason for the exponential scaling of the sampling overhead with respect to the number of cut gates.

A central question that has been studied in previous works~\cite{piv23,Lowe2023,BPS23} is whether allowing the smaller subcircuits to exchange classical information can improve the sampling overhead.
To distinguish between the $\gamma$-factor in the two settings where classical communication is allowed or not, we introduce a corresponding subscript $\gamma_{\LO}$, respectively $\gamma_{\LOCC}$, where $\LO$ stands for \emph{local operations} and $\LOCC$ for \emph{local operations with classical communication}.
In the setting of wire cutting, it was recently proven that there is a strict separation between $\LO$ and $\LOCC$, in the sense that the $\gamma$-factor of a wire cut is strictly smaller when classical communication is allowed~\cite{BPS23}. 
However, for gate cutting it is unknown if there is a separation between the two settings.
Computing $\gamma$-factors has proven to be a difficult problem in practice, and explicit values are only known for a few gates. For these gates, there is no separation~\cite{piv23}.
Hence, it is natural to ask: 
\begin{question}\label{question1}
    Is $\LOCC$ strictly more powerful for circuit cutting than $\LO$?
\end{question}

An additional difficulty in analyzing sampling overheads in circuit cutting is that the $\gamma$-factor only characterizes the optimal overhead for cutting a single non-local gate.
When one considers cutting a circuit with multiple non-local gates, the $\gamma$-factors of the individual gates do not fully capture the optimal overhead of cutting the complete circuit at once.
For example, it has been shown~\cite{piv23} that for a Clifford gate $U$ and in the setting of $\LOCC$ it is strictly cheaper to cut $n>1$ copies of the gate that happen in the same time slice (as seen in~\cref{fig_parallelCut}), as opposed to using the optimal single-gate cutting procedure $n$ times.
Mathematically, this is captured by a strictly submultiplicative behavior of the $\gamma$-factor
\begin{equation}\label{eq:strict_submult}
    \gamma_{\LOCC}(U^{\otimes n}) < \gamma_{\LOCC}(U)^n \, ,
\end{equation}
where $U^{\otimes n}$ stands for $n$ parallel copies of the non-local gate $U$ each acting on a distinct set of qubits.
Previously to this work, it was unknown whether such a strict submultipliative behavior also exists for non-Clifford gates $U$ and/or in the absence of classical communication (i.e. with $\gamma_{\LO}$ instead of $\gamma_{\LOCC}$).
We thus ask:
\begin{question}\label{question2}
    Is cutting arbitrary gates simultaneously cheaper than cutting them separately?
\end{question}

When applying circuit cutting in practice, one will typically be confronted with the situation that the large circuit contains multiple non-local gates which occur at possibly very distant points in time.
The sub-multiplicative behavior of the $\gamma$-factor cannot be directly applied to cut multiple non-local gates that do not occur in the same time slice.
However, one can still try to apply the insight that ``global'' cutting strategies can be more efficient by dealing with multiple gates at once instead of treating them individually.

For better illustration of the different problems at hand, we introduce three separate settings depicted in~\cref{fig_cutting_scenarios}, that we will consider throughout this work.
In the \emph{single cut} setting, we consider only one single instance of some non-local gate $U$ that we want to cut.
Here, the optimal sampling overhead is fully characterized by the $\gamma$-factor of $U$.
In the \emph{parallel cut} setting, we consider multiple gates $(U_i)_{i=1}^n$ that we want to cut which all occur in the same time slice of the circuit.
Note that the parallel cut setting is a special case of the single-cut setting with $U=\bigotimes_{i=1}^n U_i$ and the optimal sampling overhead is thus fully characterized by the $\gamma$-factor of $\bigotimes_{i=1}^n U_i$.
\begin{figure}[!htb]
    \centering
    \begin{subfigure}[b]{0.2\textwidth}
        \centering
            \begin{tikzpicture}[thick,scale=0.8]
        \def \s{0.1}
     \draw [fill=gray!15,draw=none] (0.05,0.06) rectangle (2.45,1.5);   
     \draw [fill=cyan!15,draw=none] (0.05,-0.06) rectangle (2.45,-1.5);

     \draw [fill=gray!15,draw=none] (\s + 0.1,1.4+0.4) rectangle (\s+0.1+0.25,1.4+0.25+0.4);
     \draw [fill=cyan!15,draw=none] (\s + 1.6,1.4+0.4) rectangle (\s+1.6+0.25,1.4+0.25+0.4);     
     \node[gray] at (\s + 0.6,1.5+0.4) {\footnotesize{$A$}};
     \node[cyan] at (\s + 2.1,1.5+0.4) {\footnotesize{$B$}};

     \draw (0,0.5) -- (2.5,0.5);
     \draw (0,-0.5) -- (2.5,-0.5); 
     
     \draw (0,0.9) -- (2.5,0.9);
     \draw (0,-0.9) -- (2.5,-0.9); 

     \draw (0,1.3) -- (2.5,1.3);
     \draw (0,-1.3) -- (2.5,-1.3);

     \def \xx{1.25}
     \draw [fill=red!40] (\xx,0.9) circle (2.5mm);
     \node at (\xx,0.9) {\footnotesize{$U$}};
     \draw (\xx,0.9-0.25) -- (\xx,-0.9+0.25);
     \draw [fill=red!40](\xx,-0.9) circle (2.5mm);
     \node at (\xx,-0.9) {\footnotesize{$U$}};
     
     \node at (1.25,0) {$--------$};
     


    \end{tikzpicture}
                \vspace{4.5mm}
        \caption{Single cut}
        \label{fig_singleCut}
    \end{subfigure}
        \hfill
    \begin{subfigure}[b]{0.35\textwidth}
            \centering
            \begin{tikzpicture}[thick,scale=0.8]
    \def\x{0}

    \def\s{1.8}

     \draw [fill=gray!15,draw=none] (\x+0.05,0.06) rectangle (\x+6.0,1.5);   
     \draw [fill=cyan!15,draw=none] (\x+0.05,-0.06) rectangle (\x+6.0,-1.5);

     \draw [fill=gray!15,draw=none] (\s + 0.1,1.4+0.4) rectangle (\s+0.1+0.25,1.4+0.25+0.4);
     \draw [fill=cyan!15,draw=none] (\s + 1.6,1.4+0.4) rectangle (\s+1.6+0.25,1.4+0.25+0.4);     
     \node[gray] at ( \s +0.6,1.5+0.4) {\footnotesize{$A$}};
     \node[cyan] at ( \s +2.1,1.5+0.4) {\footnotesize{$B$}};



     \draw (\x,0.5) -- (\x+6,0.5);
     \draw (\x,-0.5) -- (\x+6,-0.5); 
     
     \draw (\x,0.9) -- (\x+6,0.9);
     \draw (\x,-0.9) -- (\x+6,-0.9);

     \draw (\x,1.3) -- (\x+6,1.3);
     \draw (\x,-1.3) -- (\x+6,-1.3);

     \def \xx{1.25+\x}
     \draw [fill=red!40] (\xx,1.3) circle (2.5mm);
     \node at (\xx,1.3) {\footnotesize{$U_1$}};
     \draw (\xx,1.3-0.25) -- (\xx,-0.5+0.25);
     \draw [fill=red!40](\xx,-0.5) circle (2.5mm);
     \node at (\xx,-0.5) {\footnotesize{$U_1$}};

     \def \xx{3+\x}
     \draw [fill=yellow!40] (\xx,0.9) circle (2.5mm);
     \node at (\xx,0.9) {\footnotesize{$U_2$}};
     \draw (\xx,0.9-0.25) -- (\xx,-1.3+0.25);
     \draw [fill=yellow!40](\xx,-1.3) circle (2.5mm);
     \node at (\xx,-1.3) {\footnotesize{$U_2$}};

     \def \xx{4.75+\x}
     \draw [fill=ForestGreen!40] (\xx,0.5) circle (2.5mm);
     \node at (\xx,0.5) {\footnotesize{$U_3$}};
     \draw (\xx,0.5-0.25) -- (\xx,-0.9+0.25);
     \draw [fill=ForestGreen!40](\xx,-0.9) circle (2.5mm);
     \node at (\xx,-0.9) {\footnotesize{$U_3$}};

\node at (\x+3,0) {$---------------$};



    \end{tikzpicture}
        \caption{Parallel cut}
        \label{fig_parallelCut}
    \end{subfigure}        
            \hfill
    \begin{subfigure}[b]{0.35\textwidth}
            \centering
            \begin{tikzpicture}[thick,scale=0.8]
    \def\x{0}
    
    \def \s{1.8}

     \draw [fill=gray!15,draw=none] (\x+0.05,0.06) rectangle (\x+6.0,1.5);   
     \draw [fill=cyan!15,draw=none] (\x+0.05,-0.06) rectangle (\x+6.0,-1.5);

     \draw [fill=gray!15,draw=none] (\s+0.1,1.4+0.4) rectangle (\s+0.1+0.25,1.4+0.25+0.4);
     \draw [fill=cyan!15,draw=none] (\s+1.6,1.4+0.4) rectangle (\s+1.6+0.25,1.4+0.25+0.4);     
     \node[gray] at (\s + 0.6,1.5+0.4) {\footnotesize{$A$}};
     \node[cyan] at (\s + 2.1,1.5+0.4) {\footnotesize{$B$}};

\node at (\x+3,0) {$---------------$};


     \draw (\x,0.5) -- (\x+6,0.5);
     \draw (\x,-0.5) -- (\x+6,-0.5); 
     
     \draw (\x,0.9) -- (\x+6,0.9);
     \draw (\x,-0.9) -- (\x+6,-0.9);

     \draw (\x,1.3) -- (\x+6,1.3);
     \draw (\x,-1.3) -- (\x+6,-1.3);

     \def \xx{0.8+\x}
     \draw [fill=red!40] (\xx,0.9) circle (2.5mm);
     \node at (\xx,0.9) {\footnotesize{$U_1$}};
     \draw (\xx,0.9-0.25) -- (\xx,-0.5+0.25);
     \draw [fill=red!40](\xx,-0.5) circle (2.5mm);
     \node at (\xx,-0.5) {\footnotesize{$U_1$}};

     \def \xx{2+\x}

     \draw [fill=blue!50] (\xx-0.3,1.4) -- (\xx+0.3,1.4) -- (\xx+0.5,0.4) -- (\xx+0.5,-0.4) -- (\xx+0.3,-1.4) -- (\xx-0.3,-1.4) -- (\xx-0.5,-0.4) -- (\xx-0.5,0.4) --cycle; 
     \node at (\xx,0) {\large{$\cE$}};

     \def \xx{3.1+\x}
     \draw [fill=Yellow!40] (\xx,1.3) circle (2.5mm);
     \node at (\xx,1.3) {\footnotesize{$U_2$}};
     \draw (\xx,1.3-0.25) -- (\xx,-1.3+0.25);
     \draw [fill=Yellow!40](\xx,-1.3) circle (2.5mm);
     \node at (\xx,-1.3) {\footnotesize{$U_2$}};

    \def \xx{4.3+\x}

     \draw [fill=purple!50] (\xx-0.3,1.4) -- (\xx+0.3,1.4) -- (\xx+0.5,0.4) -- (\xx+0.5,-0.4) -- (\xx+0.3,-1.4) -- (\xx-0.3,-1.4) -- (\xx-0.5,-0.4) -- (\xx-0.5,0.4) --cycle; 
     \node at (\xx,0) {\large{$\cF$}};
     
\def \xx{5.5+\x}
     \draw [fill=ForestGreen!40] (\xx,0.5) circle (2.5mm);
     \node at (\xx,0.5) {\footnotesize{$U_3$}};
     \draw (\xx,0.5-0.25) -- (\xx,-0.9+0.25);
     \draw [fill=ForestGreen!40](\xx,-0.9) circle (2.5mm);
     \node at (\xx,-0.9) {\footnotesize{$U_3$}};

    \end{tikzpicture}
        \caption{Black box cut}
        \label{fig_blackBoxCut}
    \end{subfigure}
\caption{Three different cutting scenarios, where the dashed line represents the cut. The unitaries $U_i$ are arbitrary two-qubit gates acting on $A$ and $B$. In the black box setting $\cE$ and $\cF$ denote arbitrary $\CPTP$ maps.}
\label{fig_cutting_scenarios}
\end{figure}
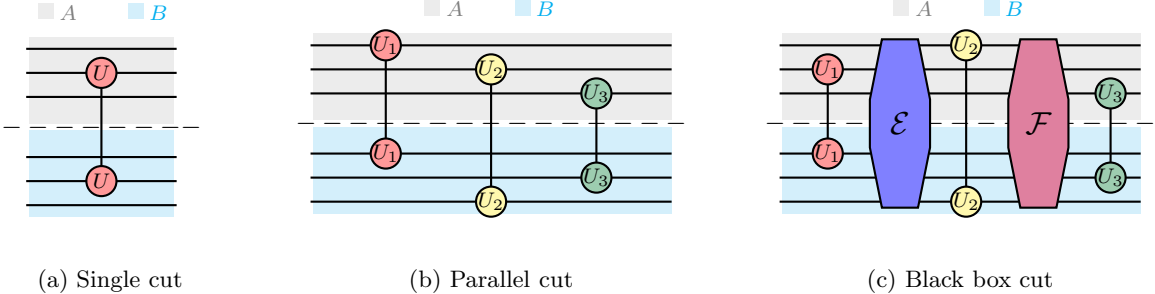

The case where we consider a circuit with multiple non-local gates $(U_i)_{i=1}^n$ that occur at different time slices is a bit more involved.
Clearly, the optimal procedure would be to consider the whole circuit (including local and non-local gates) as one big unitary $U_{\mathrm{tot}}$ and then determining the $\gamma$-factor of that.
However, this is typically intractable as just determining $U_{\mathrm{tot}}$ is as difficult as simulating the circuit itself.
For this purpose we introduce a third setting, which we call the \emph{black box} setting.
Here, we consider multiple non-local gates $(U_i)_{i=1}^n$ which we want to cut, but between these gates we may have unknown operations.
The circuit cutting technique is not allowed to have any information of what happens inside them (hence the name ``black box'') and must work for any choice of the unknown operations.

Clearly the black box setting is at least as difficult as the parallel cut setting, as the latter can be retrieved as a special choice of the black box.
It is thus natural to ask:
\begin{question}\label{question3}
    In the black box setting, can cutting the gates jointly improve the sampling overhead?
\end{question}
Here again, the answer of~\cref{question3} is known for Clifford gates under $\LOCC$~\cite{piv23}.
The gate-based teleportation scheme used by the authors makes it possible to reduce the black box setting to the parallel cut setting, showing that the optimal sampling overhead is the same in both cases.
Since gate teleportation requires classical communication and only works for Clifford gates, it was previously unknown whether the same answer to~\cref{question3} would also hold in a more general setting.

\paragraph{Results}

In this work, we answer~\cref{question1,question2,question3} for the case where the unitaries to be cut are all two-qubit gates.
It is well known, that an arbitrary two-qubit unitary $U$ exhibits a so-called \emph{KAK} decomposition
\begin{equation}\label{eq_kak_form}
    U = (V_1\otimes V_2)\left(\sum\limits_{k=0}^3 u_k \sigma_k\otimes \sigma_k\right)(V_3\otimes V_4) \, ,
\end{equation}
where $\sigma_0=\mathds{1}$, $\sigma_1,\sigma_2,\sigma_3$ are the Pauli matrices, the $u_i\in \mathbb{C}$ fulfill $\sum_{i=0}^3 \lvert u_i\rvert ^2 = 1$ and the $V_i$ are single-qubit unitaries.
For any such unitary $U$ we can define the quantity
\begin{equation}
    \Delta_U := \sum\limits_{k\neq k'} \lvert u_k\rvert \lvert u_{k'}\rvert \, ,
\end{equation}
which is nonnegative and zero only for a separable unitary.
We show in~\cref{cor_two_qubit_gamma} that the $\gamma$-factor of $U$ is given by
\begin{align} \label{eq_overhead_setting1}
    \gamma_{\LO}(U) = \gamma_{\LOCC}(U)  = 1 + 2\Delta_U .
\end{align}
Furthermore, in~\cref{cor_parallel_cuts} we characterize the $\gamma$-factor in the parallel cut setting and show that
\begin{align} \label{eq_overhead_setting2}
    \gamma_{\LO}(U^{\otimes n}) = \gamma_{\LOCC}(U^{\otimes n}) = 2(1+\Delta_U)^n - 1 < \gamma_{\LO}(U)^n = \gamma_{\LOCC}(U)^n \, .
\end{align}
As a consequence, we show in~\cref{cor_regularized_gamma} that the regularized $\gamma$-factor, i.e.~the effective sampling overhead per gate in the limit of many gates, is given by
\begin{align}
    \lim\limits_{n\rightarrow\infty}\left( \gamma_{\LO}(U^{\otimes n}) \right)^{1/n} = \lim\limits_{n\rightarrow\infty}\left( \gamma_{\LOCC}(U^{\otimes n}) \right)^{1/n} = 1 + \Delta_U \, .
\end{align}
We therefore have a full understanding of the optimal sampling overhead of two-qubit gates in the single-cut and parallel cut settings, and classical communication does not reduce the overhead in either of them.
Our results are derived using a more general characterization of the $\gamma$-factor of general (possibly larger than two-qubit) unitaries that exhibit a form analogous to~\cref{eq_kak_form}, as seen in~\cref{Master}.
Generally, our techniques can be seen as a generalization and improvement of the ideas introduced in~\cite{MF_21}.

To tackle the black box setting, we devise an explicit protocol to perform black box cutting by reducing it to the parallel setting.
Therefore, the overhead is the same as in the parallel setting (i.e. as in~\cref{eq_overhead_setting2}).
Since the black box setting is at least as difficult than the parallel setting, our result is therefore optimal.
As in the approach of~\cite{piv23}, this protocol requires additional ancilla qubits to carry information across the black boxes.
However, our protocol does not require classical communication and it also works for non-Clifford gates.

In summary, we answer the questions as follows:
\begin{description}
    \item[Answer to~\cref{question1}:] For quasiprobabilistic circuit cutting, classical communication does not improve the overhead of cutting arbitrary two-qubit gates. 
    \item[Answer to~\cref{question2}:] The answer is positive for all (non-local) two-qubit gates. In other words, the $\gamma$-factor behaves strictly submultiplicatively.
    \item[Answer to~\cref{question3}:] For arbitrary two-qubit gates, the black box setting has the same sampling overhead as the parallel cut setting (without requiring classical communication).
    This comes at the cost of four additional ancilla qubits per gate.
\end{description}

\begin{table}[!htb]
\centering
\def\arraystretch{1.4}
  \begin{tabular}{cccc}
Setting & Reference & Example (\cref{fig_cutting_scenarios})  \\
\hline
Single cut &  \Cref{cor_two_qubit_gamma} & $\gamma(\mathrm{CNOT})=3$, $\gamma(\mathrm{CR}_X(\theta)) =  1+2 |\sin(\theta/2)|$    \\
Parallel cut & \Cref{cor_parallel_cuts} & $\gamma( \mathrm{CNOT} \otimes \mathrm{CR}_X(\theta)) =  3+4|\sin(\theta/2)|$   \\
Black box cut & \Cref{theorem:blackbox} & $\bar \gamma( \mathrm{CNOT},\mathrm{CR}_X(\theta)) =  3+4|\sin(\theta/2)|$ 
  \end{tabular}
		\caption{Summary on how to optimally cut two-qubit gates. The example considers the setting from~\cref{fig_cutting_scenarios} with $U_1=\mathrm{CNOT}$, $U_2=\mathrm{CR}_X(\theta)$, and $U_3=\mathds{1}$. The KAK parameters of these unitaries are given in~\cref{ex_2qubit}. Note that for these examples, we write $\gamma$ without subscript, because the overhead for $\LO$ and $\LOCC$ is the same. Our method in the black box setting requires additional ancillary qubits, however this is not the case for the other two settings.}
		\label{tab_results}
\end{table}
The results together with a small example are summarized in~\cref{tab_results}.
The example given in~\cref{tab_results} shows that the submultiplicative property of the $\gamma$-factor does not just appear when cutting multiple identical gates, but also when one cuts multiple distinct gates:
\begin{align} \label{eq_example}
\gamma\big( \mathrm{CNOT} \otimes \mathrm{CR}_X(\theta)\big) = 3+4|\sin(\theta/2)| < 3+6 |\sin(\theta/2)|\big) = \gamma(\mathrm{CNOT})\gamma\big(\mathrm{CR}_X(\theta)\big) \, .
\end{align}
The difference between the two sides of~\cref{eq_example} can be substantial.
The submultiplicativity of the $\gamma$-factor becomes particularly meaningful if we cut more than just two gates simultaneously.
\Cref{fig_haar} shows the sampling overhead for a two-qubit unitary that is sampled from the Haar measure. 
\begin{figure}[H]
    \centering
    \begin{subfigure}[b]{0.45\textwidth}
        \centering
          \begin{tikzpicture}[scale=0.85]
	\begin{axis}[
		height=5.0cm,
		width=8.0cm,
		grid=major,
		xlabel=$\gamma$-factor,
		ylabel=freqency,
		xmin=2,
		xmax=7,
		ymax=0.6,
		ymin=0,
	     ytick={0,0.2,0.4,0.6},
          xtick={2,3,4,5,6,7},
	]

    \addplot[dotted,smooth,name path=f] table {hist.dat};

 	\addplot[dashed,smooth,name path=f] coordinates {
    (5.71,0) (5.71,7)
  };
	
	\end{axis} 
\end{tikzpicture}
        \caption{Distribution of $\gamma$-factor for a Haar-randomly sampled two-qubit gate $U$: $\gamma(U)$ is plotted on the x-axis. The y-axis shows the relative frequency out of $10^7$ samples. The dashed line shows the average $\gamma$-factor which is approximately $\gamma(U) \approx 5.71$. }
        \label{fig_haar_distribution}
    \end{subfigure}
        \hfill
    \begin{subfigure}[b]{0.45\textwidth}
            \centering
              \begin{tikzpicture}[scale=0.85]
	\begin{axis}[
		height=5.0cm,
		width=8.0cm,
		grid=major,
        ymode=log,
		xlabel=number of gates,
		ylabel=overhead,
		xmin=1,
		xmax=10,
		ymax=100000000,
		ymin=5,
	     ytick={10,100,1000,10000,100000,1000000,10000000},
          xtick={1,2,3,4,5,6,7,8,9,10},
		legend style={at={(0.3,1)},anchor=north,legend cell align=left,font=\footnotesize} 
	]

	\addplot[dashed,smooth,name path=f] coordinates {
(1.000000000000000000, 5.709807069222020845)
(2.000000000000000000, 32.6018967677377631) 
(3.000000000000000000, 186.150540634475727) 
(4.000000000000000000, 1062.88367285423109) 
(5.000000000000000000, 6068.86070902375135) 
(6.000000000000000000, 34652.0237785075988) 
(7.000000000000000000, 197856.370333372149) 
(8.000000000000000000, 1129721.70202009784) 
(9.000000000000000000, 6450492.96044789789) 
(10, 36831070.3055322698)
	};
	\addlegendentry{single cut}
	
		\addplot[dotted,thick,smooth,name path=h] coordinates {
(1.000000000000000000, 5.709807069222020845)
(2.000000000000000000, 21.51075545309090077)
(3.000000000000000000, 74.5214130363387639)
(4.000000000000000000, 252.367055534430874)
(5.000000000000000000, 849.022030166446689)
(6.000000000000000000, 2850.74191350263951)
(7.000000000000000000, 9566.31902540836680)
(8.000000000000000000, 32096.4324150937291)
(9.000000000000000000, 107682.789461335829)
(10.000000000000000000, 361267.725884143705)
	};
		\addlegendentry{joint cut}
	
	\end{axis} 
\end{tikzpicture}
        \caption{Overhead for single and parallel cut for multiple Haar-random gates.
        The x-axis shows the number of gates to cut. The y-axis shows the overhead. Both methods scale exponentially in the number of gates, but the base of the parallel cut is lower.}
        \label{fig_scaling}
    \end{subfigure}
\caption{Submultiplicativity of $\gamma$-factor for Haar-random two-qubit unitaries.}
\label{fig_haar}
\end{figure}
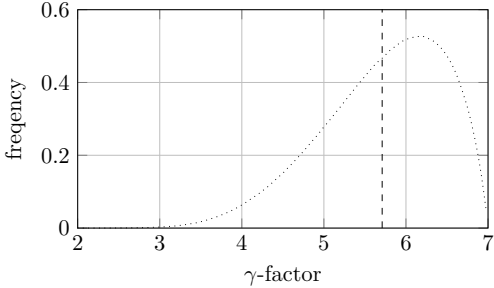
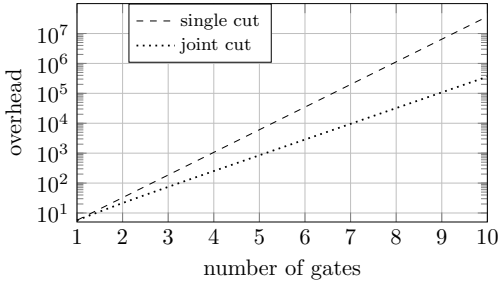


\section{Preliminaries}\label{sec_preliminaries}
\subsection{Quasiprobability simulation}
Here we briefly sketch the quasiprobability simulation framework on which our circuit cutting method is based.
For more thorough explanations, we refer the reader to~\cite{piv23,TBG17,endo18,Mitarai_2021,Piv_masterThesis}.

Consider that we are given a circuit that prepares some state and measures an observable, and we want to compute the expectation value of that observable.
Furthermore, that circuit includes some gate $\cE$ that we cannot physically run on our computer.
In the context of gate cutting, we can only physically run operations that act locally on the two circuit bipartitions, and $\cE$ would correspond to the channel of some non-local gate.

Quasiprobability simulation is a method of simulating a linear operation $\cE$ via a set of available operations $S$ that can be physically realized. 
Concretely, we can simulate an operation $\cE$ if we can decompose it as
\begin{equation}\label{eq_qpd_decomp}
    \cE = \sum_{i=1}^m a_i \mathcal{F}_i\, ,
\end{equation}
where $a_i \in \R$ and $\mathcal{F}_i \in S$. 
This decomposition can be rewritten as
\begin{equation}\label{eq_qpd_decomp_rewritten}
    \cE = \sum_{i=1}^m p_i \mathcal{F}_i \cdot \mathrm{sign}(a_i)(\sum_i |a_i|) \, ,
\end{equation}
for the probability distribution $p_i = |a_i|/(\sum_i |a_i|)$.
\Cref{eq_qpd_decomp_rewritten} implies that the expectation value of the circuit can be obtained with following Monte Carlo sampling technique:
Every shot of the circuit, a random index $i$ is sampled according to the probability distribution $(p_i)_i$.
Then, the gate $\cE$ is replaced by the operations $\cF_i$ and this modified circuit is physically executed on the hardware.
The final measurement outcome is multiplied by $\mathrm{sign}(a_i)(\sum_i |a_i|)$ and stored in memory.
By repeating this process many times and averaging the resulting outcomes,~\cref{eq_qpd_decomp_rewritten} implies that one will obtain the expectation value of the original circuit.

However, because the post-processing boosts the magnitude of the measurement outcomes by $(\sum_i |a_i|)$, the method incurs an additional sampling overhead if one wants to estimate the expectation value to the same accuracy.
More concretely, to estimate the expectation value to an additive error of at most $\epsilon$ with a probability of at least, $1-\delta$, Hoeffding's bound implies that this can be achieved with a number of shots equal to $C\cdot (\sum_i |a_i|)^2 \epsilon ^{-2}\ln \left( \frac{1}{2\delta}\right)$ for some constant $C$.
So clearly, if one wants to use the quasiprobability simulation technique, one should use a decomposition that exhibits the smallest possible value of $(\sum_i |a_i|)$.
Since the simulation overhead scales with $(\sum_i |a_i|)^2$, we assign a special name for the smallest achievable value of $(\sum_i |a_i|)$:
\begin{definition} \label{def_gamma_factor}
  The $\gamma$-factor of an operation $\cE$ over $S$ is defined as
  \begin{align}\label{eq_def_gamma}
      \gamma_S(\mathcal{E}) \coloneqq \min \Big\{ \sum_{i=1}^m \lvert a_i \rvert :  \mathcal{E} = \sum\limits_{i=1}^m a_i \mathcal{F}_i \text{ where } m\geq 1, \mathcal{F}_i\in S \textnormal{ and } a_i \in \R \Big\} \, .
  \end{align}
\end{definition}

A particular application of quasiprobability simulation is gate cutting~\cite{Mitarai_2021,piv23}.
Consider a bipartition of the qubits in the circuit into two systems $A$ and $B$ corresponding to the subcircuits obtained from the circuit cutting.
The operation $\cE$ is then a unitary channel $\cE(\rho):=U\rho U^{\dagger}$ for some unitary $U$ that acts non-locally across $A$ and $B$.
As a slight abuse of notation, we sometimes denote the $\gamma$-factor of a unitary channel simply as $\gamma_S(U)$.
The set of operations $S$ into which we decompose our gate is then chosen to be either local operations on the systems $A$ and $B$ (denoted by $\LO(A,B)$) or local operations with classical communication ($\LOCC(A,B)$).
If this quasiprobability simulation is applied for all gates acting non-locally across $A$ and $B$, we can thus replace the large non-local circuit with a probabilistic mixture of local circuits on $A$ and $B$ that do not have any entangling gates between them (plus some additional classical post-processing).
The method therefore allows us to obtain the expectation value of the original circuit by only physically executing smaller circuits.

Following~\cite{piv23}, we defined the set of local operations $\LO(A,B)$ across the systems $A,B$ as follows:
\begin{definition}
    The set $\LO(A,B)$ is given by $\{\cF_A \ox \cF_B \mid \cF_A \in \cN(A) \, \text{and} \, \cF_B \in \cN(B)\}$, where $\cN(X)$ is defined as\footnote{It is even sufficient to only demand $\cA^+ + \cA^- \in \CPTN(X)$, but this can be achieved in the $\CPTP$ setting by postweighting with $0$. }
    \begin{equation}
        \cN(X) = \left\{\cA^+ - \cA^- \mid \cA^+,\cA^- \text{completely positive},  \cA^+ + \cA^- \in \CPTP(X) \right\}\, .
    \end{equation}
    where $\CPTP(X)$ is the set of all completely-postive trace-preserving maps on the system $X$.
\end{definition}
This is a slightly non-standard definition of non-local operations, since it allows for non-completely positive maps.
The reason for this is that this definition captures the idea, first seen in \cite{Mitarai_2021} and later expanded upon in \cite{zhao2023power}, that the individual operations $\mathcal{F}_i$ can be realized as quantum instruments, and depending on which component of the instrument is executed, the measurement outcome at the end of the circuit is weighted with $+1$ or $-1$. 
As an illustrative example, consider two completely-positive trace-nonincreasing maps $\cA^+$ and $\cA^-$ s.t. $\cA^++\cA^- \in \CPTP$.
One can simulate the map $\cA^+ - \cA^-$ via the $\CPTP$-map $\cF$
\begin{equation}
    \cF(\rho) = \cA^+(\rho) \ox \ket{0}\bra{0} + \cA^-(\rho) \ox \ket{1}\bra{1}
\end{equation}
followed by a measurement of the ancillary qubit in the Z-basis.
Depending on the ancilla measurement outcome, one then weights the final measurement outcome of the circuit by $\pm1$. We refer to this as the \emph{negativity} trick.

A characterization of the set of $\LOCC(A,B)$ is very complicated, since it can contain multiple rounds of operations and communications between A and B, so we refrain from writing the full definition here.
For an in-depth definition we refer the reader to~\cite{CLMOW14}.
Since $\LOCC$ is usually defined as a quantum instrument, we can realize the same post-processing trick as with $\LO$ above and allow for non-completely positive maps by weighting some of the element of the instrument by $-1$.


\subsection{Representation of two-qubit unitaries}\label{ssec:kak}
It is well known that a semi-simple Lie algebra $\ag$ can be decomposed as $\ag = \am \oplus \ak$, $\am = \ak^\perp$, with the following relations
\begin{align}
    [\ak,\ak] \subset \ak\, , \qquad 
    [\ak,\am] = \am \, , \qquad \textnormal{and} \qquad 
    [\am,\am] \subset \ak \, .
\end{align}
This decomposition is called Cartan decomposition. For such a pair, one can find the corresponding Cartan subalgebra $\ah$, which is the largest subalgebra that is contained in $\am$. 
For these algebras, it is well known, that the Lie group $G$ can be written as\begin{equation}
    G = KAK \, ,
\end{equation}
where $K=\exp(\ak)$ and $A=\exp(\ah)$ (see for instance \cite{KHANEJA200111}). This decomposition is also called the KAK decomposition. 

Applied to $\ag = \mathfrak{su}(2^n)$, we find $\ak = \mathfrak{su}(2^{n-1}) \oplus \mathfrak{su}(2^{n-1})$, which is in general not useful for circuit cutting. However, in the case of $\ag = \mathfrak{su}(4)$, this simplifies to $\ak = \mathfrak{su}(2) \oplus \mathfrak{su}(2)$ and therefore $K = SU(2)\ox SU(2)$. These elements can be implemented with local operations. For the cartan subalgebra one finds
$\ah = \{\ci \sigma_1 \ox \sigma_1, \ci \sigma_2 \ox \sigma_2,\ci \sigma_3 \ox \sigma_3\}$, where we denoted the Pauli matrices by $\sigma_1 = X$, $\sigma_2 = Y$, $\sigma_3 = Z$. Furthermore let $\sigma_0=\mathds{1}$ denote the identity matrix. 

Using the above, it follows that any two-qubit unitary U(4) has a KAK decomposition
\begin{align} \label{eq_standard_dec}
    U = \left(V_1\otimes V_2\right)  \exp\left( \sum_{k=1}^3 \ci\theta_k \sigma_k \otimes \sigma_k \right) \left(V_3\otimes V_4\right)  \, ,
\end{align}
for some single-qubit unitaries $V_1,V_2,V_3,V_4$, and $\theta_k\in\mathbb{R}$ such that $|\theta_3| \leq \theta_2 \leq \theta_1 \leq \pi/4$.
Evaluating the exponential function, this can be rewritten as
\begin{align}  \label{eq_standard_dec_u}
    U 
    =\left(V_1\otimes V_2 \right)  \left( \sum_{k=0}^3 u_k \sigma_k \otimes \sigma_k\right)  \left(V_3\otimes V_4\right)
    =:\left(V_1\otimes V_2 \right)  W \left(V_3\otimes V_4\right) \, ,
\end{align}
where $u_k \in\mathbb{C}$ depend on $\theta_1,\theta_2,\theta_3$ and fulfill $\sum_{i=0}^3 |u_i|^2 = 1$.
\Cref{eq_standard_dec_u} is central in our analysis of optimal gate cutting of two-qubit unitaries, as cutting $U$ is equivalent to cutting the interaction unitary $W$.
In particular we have $\gamma_{S}(U) = \gamma_S(W)$ and in the proof of~\cref{Master}, we thus focus on cutting $W$.
Since there does not exist an analogous decomposition to~\cref{eq_standard_dec_u} for general unitaries, optimal cutting still remains an open problem.


\section{Cutting one single two-qubit unitary} \label{sec_single_2qubit}
In this section we show how to optimally cut an arbitrary two-qubit gate.
\begin{corollary} \label{cor_two_qubit_gamma}
Let $U$ be a two-qubit unitary with KAK parameters $(u_0,u_1,u_2,u_3) \in \C^4$. Then, 
\begin{align}
\gamma_{\LO}(U)=\gamma_{\LOCC}(U) = 1 + 2 \sum_{i\ne j} |u_i| |u_j| \, . 
\end{align}
\end{corollary}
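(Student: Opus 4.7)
The plan is to deduce \Cref{cor_two_qubit_gamma} as a specialization of \cref{Master}, preceded by a reduction of the arbitrary two-qubit unitary $U$ to its KAK interaction factor. Writing $U=(V_1\ox V_2)\,W\,(V_3\ox V_4)$ with $W=\sum_{k=0}^3 u_k\,\sigma_k\ox\sigma_k$ as in \cref{eq_standard_dec_u}, I would first observe that $\gamma_S$ is invariant under pre- and post-composition by local unitary channels for both $S=\LO$ and $S=\LOCC$: any decomposition $\cE_U=\sum_i a_i\cF_i$ over $S$ converts to a decomposition of $\cE_W$ with the same coefficients $a_i$ by sandwiching each $\cF_i$ with the adjoint local unitary channels of $V_1\ox V_2$ and $V_3\ox V_4$, and the reverse direction is symmetric. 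Hence $\gamma_S(U)=\gamma_S(W)$, and the task reduces to the interaction unitary $W$, which fits the hypothesis of \cref{Master} exactly.

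For the upper bound I would expand
\begin{equation*}
\cE_W(\rho)=\sum_{k,k'} u_k\,\overline{u_{k'}}\,(\sigma_k\ox\sigma_k)\,\rho\,(\sigma_{k'}\ox\sigma_{k'})
\end{equation*}
and treat the diagonal and off-diagonal pieces separately. The diagonal $k=k'$ contribution is a convex combination of local Pauli channels with total weight $\sum_k|u_k|^2=1$. For each unordered pair $\{k,k'\}$ with $k\ne k'$ I group the terms $(k,k')$ and $(k',k)$ and apply, on each qubit separately, the identity
\begin{equation*}
A\rho B + B\rho A = \tfrac12\bigl[(A+B)\rho(A+B)-(A-B)\rho(A-B)\bigr]
\end{equation*}
with $A=\sigma_k$, $B=\sigma_{k'}$, together with an analogous formula for the antisymmetric part $A\rho B - B\rho A$ to accommodate arbitrary phases of $u_k\overline{u_{k'}}$. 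The Kraus operators $\sigma_k\pm\sigma_{k'}$ and $\sigma_k\pm\ci\sigma_{k'}$ define local CP maps whose appropriately normalised sums lie in $\CPTP$, so the negativity trick encoded in the definition of $\cN(X)$ repackages each signed combination as a valid element of $\LO(A,B)$. A short bookkeeping of the resulting coefficients shows that each unordered pair contributes weight $4|u_k||u_{k'}|$, for a total off-diagonal weight of $2\sum_{k\ne k'}|u_k||u_{k'}|=2\Delta_U$. Added to the diagonal weight $1$ this gives $\gamma_{\LO}(W)\le 1+2\Delta_U$. As a sanity check, for $\mathrm{CNOT}$ one has $|u_0|=|u_1|=\tfrac{1}{\sqrt 2}$ and $|u_2|=|u_3|=0$, so the formula returns $1+2\Delta_U=3$, matching the known value $\gamma(\mathrm{CNOT})=3$.

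The main obstacle is the matching $\LOCC$ lower bound $\gamma_{\LOCC}(W)\ge 1+2\Delta_U$, since then the standard inequality $\gamma_{\LOCC}\le\gamma_{\LO}$ closes the chain to equality. The natural approach is quasiprobability duality: exhibit a Hermitian-preserving linear functional $\Phi$ on two-qubit superoperators that evaluates to $1+2\Delta_U$ on $\cE_W$ while satisfying $|\Phi(\cF)|\le 1$ for every $\cF\in\LOCC(A,B)$. Constructing such a witness and verifying its boundedness across the adaptive multi-round structure of $\LOCC$ is the technical heart of \cref{Master}; once it is granted, \Cref{cor_two_qubit_gamma} follows immediately by combining the lower bound with the explicit $\LO$ upper bound and the Step-1 reduction to $W$.
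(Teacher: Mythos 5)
Your proposal is correct and follows essentially the paper's own route: reduce to the KAK interaction unitary $W$ using local-unitary invariance of $\gamma_S$, then invoke \cref{Master} with $L_k=R_k=\sigma_k$ (whose trace-orthogonality is what triggers the equality case of \cref{Master}) together with $\gamma_{\LOCC}\le\gamma_{\LO}$; your explicit upper-bound decomposition is a variant of the paper's \cref{lem_upper_bound}. The only mismatch is in how you describe the lower bound inside \cref{Master}: the paper does not construct an LOCC dual witness, but instead bounds $\gamma_{\LOCC}(W)$ from below by the LOCC $\gamma$-factor of its Choi state (via~\cite{piv23}), whose Schmidt coefficients are exactly $|u_k|$ by Pauli orthogonality, giving $2\big(\sum_k|u_k|\big)^2-1=1+2\sum_{k\ne k'}|u_k||u_{k'}|$ --- but since you defer that step to \cref{Master} anyway, your argument for the corollary stands.
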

\begin{proof}
    \cref{cor_two_qubit_gamma} follows as a special case of~\cref{Master}. Since every two-qubit unitary can be written as in~\cref{eq_standard_dec_u}, we can directly apply~\cref{Master} with $L_k=\sigma_k$ and $R_k=\sigma_k$. The coefficients $u_k$ are directly given through the KAK coefficients of $U$. The optimal decomposition of $U$ can be found in the proof of~\cref{Master}. 
\end{proof}

The result proves that there is no advantage for cutting a two-qubit unitary with $\LOCC$ compared to the $\LO$ setting. In the following, whenever this is the case, we will drop the subscript of the $\gamma$-factor.
The previously best known decomposition~\cite{Mitarai_2021} had a sampling overhead of $1+\sum_{i\ne j} (|u_i u_j^* + u_j u_i^*| + |u_i u_j^* - u_j u_i^*|)$, which is only optimal for certain gates.

\begin{example} \label{ex_2qubit}
Let us apply~\cref{cor_two_qubit_gamma} to some well-known two-qubit unitaries to check consistency with~\cite{piv23}:
\begin{enumerate}[(i)]
\item $\gamma(\mathrm{CNOT})=3$, since $u=\left(\frac{1}{\sqrt{2}},\frac{\ci}{\sqrt{2}},0,0\right)$.
\item $\gamma(\mathrm{SWAP})=7$, since $u=\left(\frac{\sqrt{2}(1+\ci)}{4},\frac{\sqrt{2}(1+\ci)}{4},\frac{\sqrt{2}(1+\ci)}{4},\frac{\sqrt{2}(1+\ci)}{4}\right)$.
\item $\gamma(\mathrm{CR}_X(\theta)) =  1+2 |\sin(\theta/2)|$, since  $u_{\mathrm{CR}_X(\theta)}=(\cos(\frac{\theta}{4}),\ci \sin(\frac{\theta}{4}),0,0)$.\footnote{$\mathrm{CR}_X(\theta)$ denotes a controlled $\mathrm{R}_X(\theta)$ gate with rotation angle $\theta$, where $\mathrm{R}_X(\theta):=\ee^{-\ci \frac{\theta}{2} X}$.}
\end{enumerate}
\end{example}

\begin{corollary}\label{co_bounded}
Let $U$ be a two-qubit unitary, then $1 \leq \gamma(U)\leq 7$.
\end{corollary}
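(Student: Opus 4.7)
The plan is to derive both bounds directly from the explicit formula $\gamma(U) = 1 + 2\sum_{i\neq j}|u_i||u_j|$ of~\cref{cor_two_qubit_gamma}, combined with the KAK normalization $\sum_{k=0}^3 |u_k|^2 = 1$. Neither bound should require any circuit-cutting machinery beyond this closed form; it is a pure optimization problem on the simplex of allowed KAK magnitudes.

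For the lower bound, I would simply observe that each term $|u_i||u_j|$ is nonnegative, so $\gamma(U) \geq 1$, with equality when at most one of the $|u_k|$ is nonzero (corresponding to a local unitary). For the upper bound, introduce the substitution $x_k := |u_k| \geq 0$, so the constraint is $\sum_{k=0}^3 x_k^2 = 1$ and the objective is to maximize
\begin{equation}
    \sum_{i\neq j} x_i x_j \;=\; \Bigl(\sum_{k=0}^3 x_k\Bigr)^{\!2} - \sum_{k=0}^3 x_k^2 \;=\; \Bigl(\sum_{k=0}^3 x_k\Bigr)^{\!2} - 1 \, .
\end{equation}
By Cauchy--Schwarz, $\sum_{k=0}^3 x_k \leq 2\sqrt{\sum_{k=0}^3 x_k^2} = 2$, so the maximum of the double sum is $3$, yielding $\gamma(U)\leq 1 + 2\cdot 3 = 7$.

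As a sanity check, equality in Cauchy--Schwarz requires $|u_0|=|u_1|=|u_2|=|u_3|=1/2$, which is precisely the KAK signature of the SWAP gate listed in~\cref{ex_2qubit}(ii), consistent with the known value $\gamma(\mathrm{SWAP})=7$. There is no real obstacle here; the only thing to be a little careful about is that the $u_k$ are complex, but only their magnitudes enter the formula, so the reduction to a problem on $\mathbb{R}_{\geq 0}^4$ is immediate and the Cauchy--Schwarz step is tight on the constraint surface.
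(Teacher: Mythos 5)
Your proof is correct and takes essentially the same route as the paper's: both derive the bounds from the closed formula of \cref{cor_two_qubit_gamma}, rewrite $\sum_{i\neq j}|u_i||u_j|$ as $\bigl(\sum_k |u_k|\bigr)^2 - 1$ using the normalization $\sum_k |u_k|^2 = 1$, and apply Cauchy--Schwarz to bound the one-norm by $2$. The only cosmetic difference is that the paper phrases the computation as $\gamma(U) = 2\norm{u}_1^2 - 1 \leq 7$, while you bound the cross-term sum directly; your tightness remark about SWAP matches \cref{ex_2qubit}.
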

\begin{proof}
Clearly $\gamma(U)$ cannot be smaller than $1$.
To see that it is bounded from above by $7$, consider KAK parameters $u:=(u_0,u_1,u_2,u_3) \in \C^{4}$. As a consequence of unitarity, these fulfill $\norm{u}_2=1$.
\Cref{cor_two_qubit_gamma} allows us to write 
\begin{align}
    \gamma(U)
    &= 1 + 2 \sum_{i \ne j} |u_i| |u_j| \\
    &= \sum_{i=0}^3 |u_i|^2 + 2 \sum_{i\neq j} |u_i| |u_j| \\ 
    &= 2\left(\sum_{i=0}^3 |u_i|\right)^2  - \sum_i |u_i|^2 \\
    &= 2 \norm{u}_1^2  - 1  \\
    &\leq 7  \, ,
\end{align}
where the final step uses the Cauchy-Schwarz inequality, i.e., $\norm{u}_1 \leq \sqrt{4} \norm{u}_2$.
\end{proof}

\section{Cutting parallel two-qubit unitaries}\label{sec_parallel_2qubit}
In~\cref{sec_single_2qubit} we have seen how to optimaly cut an arbitrary two-qubit unitary. In this section we would like to extend the setting to cutting multiple two-qubit unitaries which are arranged in parallel (cf.~\cref{fig_parallelCut}).

\begin{corollary} \label{cor_parallel_cuts}
 Let $n \in \N$ and $(U^{(i)}_{A_i B_i})_{i=1}^n$ be a family of two-qubit unitaries with KAK parameters $(u^{(i)}_0,u^{(i)}_1,u^{(i)}_2,u^{(i)}_3)_{i=1}^n$. 
 Then, for $U_{AB} = \otimes_{i=1}^n U^{(i)}_{A_i B_i}$, where $A=\otimes_{i=1}^n A_i$ and $B=\otimes_{i=1}^n B_i$ we have 
    \begin{equation}
        \gamma_{\LO(A:B)}(U_{AB}) = \gamma_{\LOCC(A:B)}(U_{AB}) = 1 + 2 \sum_{\mathbf{k} \neq\mathbf{k}'} |u_{\mathbf{k}}| |u_{\mathbf{k}'}| \, ,
    \end{equation}
    where $\mathbf{k} \in \{0,1,2,3\}^n$ and $u_{\mathbf{k}} = \prod_{i=1}^n  u^{(i)}_{k_i}$.
\end{corollary}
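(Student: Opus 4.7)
The plan is to reduce this to Theorem \ref{Master} (the ``Master'' theorem) in essentially the same way that \cref{cor_two_qubit_gamma} did, but now lifting the single-gate KAK decomposition to a tensor-product version. First I would apply the KAK decomposition (\cref{eq_standard_dec_u}) individually to each $U^{(i)}_{A_i B_i}$, writing
\begin{equation*}
    U^{(i)}_{A_i B_i} = \bigl(V_1^{(i)} \otimes V_2^{(i)}\bigr) W^{(i)}_{A_i B_i} \bigl(V_3^{(i)} \otimes V_4^{(i)}\bigr), \qquad W^{(i)}_{A_i B_i} = \sum_{k_i = 0}^3 u^{(i)}_{k_i}\, \sigma_{k_i} \otimes \sigma_{k_i}.
\end{equation*}
Because the single-qubit unitaries $V_j^{(i)}$ act only on $A_i$ or $B_i$, they are in $\LO(A{:}B)$ (and hence in $\LOCC(A{:}B)$), so sandwiching by them does not change the $\gamma$-factor. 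Therefore $\gamma_S(U_{AB}) = \gamma_S\!\bigl(\bigotimes_{i=1}^n W^{(i)}_{A_i B_i}\bigr)$ for $S \in \{\LO(A{:}B), \LOCC(A{:}B)\}$.

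Next I would expand the tensor product of the interaction unitaries into the Pauli basis, obtaining
\begin{equation*}
    \bigotimes_{i=1}^n W^{(i)}_{A_i B_i} \;=\; \sum_{\mathbf{k} \in \{0,1,2,3\}^n} u_{\mathbf{k}}\, \Sigma_{\mathbf{k}}^{A} \otimes \Sigma_{\mathbf{k}}^{B}, \qquad \Sigma_{\mathbf{k}}^{A} := \bigotimes_{i=1}^n \sigma_{k_i}^{A_i}, \quad \Sigma_{\mathbf{k}}^{B} := \bigotimes_{i=1}^n \sigma_{k_i}^{B_i},
\end{equation*}
after regrouping the $A_i$ systems into $A$ and the $B_i$ systems into $B$. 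This now has exactly the form required by \cref{Master}: a sum $\sum_{\mathbf{k}} u_{\mathbf{k}} L_{\mathbf{k}} \otimes R_{\mathbf{k}}$ with $L_{\mathbf{k}} = \Sigma_{\mathbf{k}}^A$ and $R_{\mathbf{k}} = \Sigma_{\mathbf{k}}^B$ acting on the two sides of the bipartition.

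Then I would verify that the Pauli tensor products $\{\Sigma_{\mathbf{k}}^A\}_{\mathbf{k}}$ and $\{\Sigma_{\mathbf{k}}^B\}_{\mathbf{k}}$ satisfy whatever orthogonality/unitarity hypotheses \cref{Master} demands (they are local unitaries, are Hilbert--Schmidt orthogonal, and $\Sigma_{\mathbf{0}} = \mathds{1}$). The coefficients satisfy the normalization $\sum_{\mathbf{k}} |u_{\mathbf{k}}|^2 = \prod_i \sum_{k_i} |u^{(i)}_{k_i}|^2 = 1$ as required for unitarity of the total $W$. Applying \cref{Master} then yields
\begin{equation*}
    \gamma_{\LO}(U_{AB}) = \gamma_{\LOCC}(U_{AB}) = 1 + 2 \sum_{\mathbf{k} \neq \mathbf{k}'} |u_{\mathbf{k}}|\, |u_{\mathbf{k}'}|,
\end{equation*}
as claimed. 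The closed-form simplification $1 + 2\sum_{\mathbf{k} \neq \mathbf{k}'} |u_{\mathbf{k}}||u_{\mathbf{k}'}| = 2(\sum_{\mathbf{k}}|u_{\mathbf{k}}|)^2 - 1 = 2\prod_i(\sum_{k_i}|u^{(i)}_{k_i}|)^2 - 1 = 2\prod_i(1+\Delta_{U^{(i)}}) - 1$ recovers the scaling formula in \cref{eq_overhead_setting2} when all $U^{(i)}$ are equal.

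The main obstacle I anticipate is purely bookkeeping: carefully justifying that absorbing the local KAK factors across all $n$ gates into the ambient $\LO$/$\LOCC$ preserves the $\gamma$-factor (this is routine since $\LO$ and $\LOCC$ are closed under composition with local unitaries on either side), and checking that the hypotheses of \cref{Master} really do cover the ``multi-Pauli'' choice $L_{\mathbf{k}} = \Sigma_{\mathbf{k}}^A$, $R_{\mathbf{k}} = \Sigma_{\mathbf{k}}^B$ and not just single Paulis. Once this is in place, the corollary is essentially immediate from \cref{Master}; no new optimization over decompositions is needed.
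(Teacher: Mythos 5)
Your proposal is correct and follows essentially the same route as the paper: KAK-decompose each $U^{(i)}$, absorb the local single-qubit factors, regroup the tensor product into the form $\sum_{\mathbf{k}} u_{\mathbf{k}}\, \Sigma^A_{\mathbf{k}} \otimes \Sigma^B_{\mathbf{k}}$, and invoke \cref{Master}. Your explicit check that the multi-qubit Pauli strings are Hilbert--Schmidt orthogonal (needed for the equality/lower-bound part of \cref{Master}) is a detail the paper leaves implicit, so it is a welcome addition rather than a deviation.
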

\begin{proof}
    \cref{cor_parallel_cuts} follows as a consequence of~\cref{Master}. The KAK decomposition from~\cref{eq_standard_dec_u} allows us to write for any $i \in \{1,\dots,n\}$
    \begin{equation}
        U_{A_i B_i}^{(i)} = \left(V_{A_i}^{(1,i)}\ox V_{B_i}^{(2,i)} \right)  \left( \sum_{k=0}^3 u_k^{(i)} (\sigma_k)_{A_i} \ox (\sigma_k)_{B_i} \right)  \left(V_{A_i}^{(3,i)} \ox V_{B_i}^{(4,i)} \right) \, .
    \end{equation}
This implies \small
    \begin{align}
&U_{AB} \nonumber \\
&=\bigotimes_{i=1}^n U^{(i)}_{A_i B_i} \\
      &= \Big(\! \bigotimes_{i=1}^n V_{A_i}^{(1,i)}\!\ox\! \bigotimes_{i=1}^n V_{B_i}^{(2,i)}\!\Big) \! \sum_{k_1,\ldots,k_n=0}^3 \left[\! \Big(\!\prod_{j=1}^n u^{(j)}_{k_j}\!\Big) \!\Big(\bigotimes_{i=1}^n \sigma_{k_i} \Big)_A \!\otimes \!\Big(\!\bigotimes_{i=1}^n \sigma_{k_i}\! \Big)_B \!\right] \Big(\! \bigotimes_{i=1}^n V_{A_i}^{(3,i)}\!\ox\! \bigotimes_{i=1}^n V_{B_i}^{(4,i)}\!\Big) \, ,
    \end{align}
    \normalsize
    and fulfills therefore the assumptions of~\cref{Master}. 
\end{proof}

The expression of the $\gamma$-factor simplifies further in the case that many of the unitaries $U^{(i)}$ are identical.
For instance, when there are three different types of unitaries, the $\gamma$-factor behaves as follows (the same result holds analogously for a number of gates different than three):

\begin{corollary}[Calculating the overhead]\label{cor_equation}
    Let $U,V,W$ be two-qubit unitaries and $n,m,p \in \N$. Then,
    \begin{equation} 
        \gamma(U^{\ox n}\ox V^{\ox m}\ox W^{\ox p}) 
    = 2  \Big(\sum_{i=0}^3 |u_i|\Big)^{2n} \Big(\sum_{i=0}^3 |v_i|\Big)^{2m} \Big(\sum_{i=0}^3 |w_i|\Big)^{2p} - 1 \, ,
    \end{equation}
    where $u_i,v_i$ and $w_i$ are the corresponding KAK parameters of $U$, $V$, and $W$, respectively.
\end{corollary}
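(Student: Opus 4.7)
The plan is to apply \cref{cor_parallel_cuts} directly to the $n+m+p$ parallel two-qubit unitaries, and then reorganize the resulting sum using the product structure of the KAK coefficients.

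First, I would write the multi-index $\mathbf{k} \in \{0,1,2,3\}^{n+m+p}$ as a triple $\mathbf{k} = (\mathbf{a}, \mathbf{b}, \mathbf{c})$ with $\mathbf{a}\in\{0,1,2,3\}^n$, $\mathbf{b}\in\{0,1,2,3\}^m$, $\mathbf{c}\in\{0,1,2,3\}^p$. Then the joint KAK coefficient factorizes as
\begin{equation}
|u_{\mathbf{k}}| = \prod_{i=1}^n |u_{a_i}| \prod_{j=1}^m |v_{b_j}| \prod_{l=1}^p |w_{c_l}| \, ,
\end{equation}
and consequently the single-index sum factorizes as a product of sums:
\begin{equation}
\sum_{\mathbf{k}} |u_{\mathbf{k}}| = \Big(\sum_{i=0}^3 |u_i|\Big)^{n} \Big(\sum_{i=0}^3 |v_i|\Big)^{m} \Big(\sum_{i=0}^3 |w_i|\Big)^{p} \, .
\end{equation}

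Next, I would express the double sum from \cref{cor_parallel_cuts} via the standard identity
\begin{equation}
\sum_{\mathbf{k} \neq \mathbf{k}'} |u_{\mathbf{k}}| |u_{\mathbf{k}'}| = \Big(\sum_{\mathbf{k}} |u_{\mathbf{k}}|\Big)^2 - \sum_{\mathbf{k}} |u_{\mathbf{k}}|^2 \, .
\end{equation}
The second term again factorizes, and by the unitarity condition $\sum_{i=0}^3 |u_i|^2 = \sum_{i=0}^3 |v_i|^2 = \sum_{i=0}^3 |w_i|^2 = 1$ stated after \cref{eq_standard_dec_u}, it collapses to $1$. Plugging this into the formula $\gamma(\,\cdot\,) = 1 + 2\sum_{\mathbf{k} \neq \mathbf{k}'} |u_{\mathbf{k}}| |u_{\mathbf{k}'}|$ from \cref{cor_parallel_cuts} yields exactly
\begin{equation}
\gamma(U^{\ox n}\ox V^{\ox m}\ox W^{\ox p}) = 2\Big(\sum_{i=0}^3 |u_i|\Big)^{2n} \Big(\sum_{i=0}^3 |v_i|\Big)^{2m} \Big(\sum_{i=0}^3 |w_i|\Big)^{2p} - 1 \, .
\end{equation}

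There is no genuine obstacle here: the proof is essentially a bookkeeping exercise exploiting (i) the factorization of $|u_{\mathbf{k}}|$ across the three blocks and (ii) the unit-norm constraint on the KAK coefficients of each factor. The only thing to be careful about is matching the convention that the formula of \cref{cor_parallel_cuts} indexes over \emph{ordered} pairs $(\mathbf{k},\mathbf{k}')$ with $\mathbf{k}\neq\mathbf{k}'$, which is consistent with the $\Big(\sum_{\mathbf{k}} |u_{\mathbf{k}}|\Big)^2 - \sum_{\mathbf{k}} |u_{\mathbf{k}}|^2$ rewriting used above.
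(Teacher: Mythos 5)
Your proposal is correct and follows essentially the same route as the paper: apply \cref{cor_parallel_cuts}, use the identity $1+2\sum_{\mathbf{k}\neq\mathbf{k}'}|t_{\mathbf{k}}||t_{\mathbf{k}'}| = 2\big(\sum_{\mathbf{k}}|t_{\mathbf{k}}|\big)^2-1$ (relying on $\sum_{\mathbf{k}}|t_{\mathbf{k}}|^2=1$), and factorize the sum over the multi-index into the three blocks. The only difference is that you make the unit-norm step explicit, which the paper leaves implicit.
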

\begin{proof}
    As seen in \cref{cor_parallel_cuts}, we can calculate $\gamma(T)$ as
    \begin{equation}
        \gamma(T) 
        = 1 + 2 \sum_{\mathbf{k}\ne\mathbf{k'} } |t_\mathbf{k}||t_\mathbf{k'}|
        =2\left(\sum_\mathbf{k} |t_\mathbf{k}| \right)^2 -1 \, .
    \end{equation}
    If $T=U^{\ox n}\ox V^{\ox m}\ox W^{\ox p}$, then $|t_\mathbf{k}|$ is given by
    \begin{equation}
        |t_\mathbf{k}| = \Big(\prod_{r=1}^n |u_{i_r}| \Big) \Big( \prod_{r=1}^m  |v_{j_r}| \Big) \Big( \prod_{r=1}^p |w_{l_r}|\Big) \, .
    \end{equation}
    We then calculate
    \begin{align}
         \quad \sum_\mathbf{k} |t_\mathbf{k}|      
         &= \sum_{i_1,\dots,i_n,j_1,\dots,j_m,l_1,\dots,l_p = 0}^3 |u_{i_1}|\cdots|u_{i_n}|\cdot |v_{j_1}|\cdots|v_{j_m}| \cdot |w_{l_1}|\cdots|w_{l_p}| \\
        &= \! \left(\sum_{i_1=0}^3 |u_{i_1}|\! \right)\!\cdots\!\left(\sum_{i_n=0}^3 |u_{i_n}|\! \right)\! \left(\sum_{j_1=0}^3 |v_{j_1}| \!\right)\!\cdots\!\left(\sum_{j_m=0}^3 |v_{j_m}| \!\right)\!\left(\sum_{l_1=0}^3 |w_{l_1}|\! \right)\!\cdots\!\left(\sum_{l_p=0}^3 |w_{l_p}| \!\right) \\
        &= \left(\sum_{i=0}^3 |u_{i}| \right)^n\left(\sum_{j=0}^3 |v_{j}| \right)^m\left(\sum_{l=0}^3 |w_{l}| \right)^p \, ,
    \end{align}
    which concludes the proof.
\end{proof}

\begin{example} \label{ex_parallel_cuts}
Let us apply~\cref{cor_equation} to some interesting examples:
\begin{enumerate}[(i)]
\item $\gamma(\mathrm{CNOT}^{\otimes n}) = 2^{n+1}-1$, since $u_{\mathrm{CNOT}}=(\frac{1}{\sqrt{2}},\frac{\ci}{\sqrt{2}},0,0)$ (see~\cref{ex_2qubit}).  
    This is consistent with the LOCC result from~\cite{piv23}.
\item $\gamma(\mathrm{SWAP}^{\otimes n}) = 2 \times 4^{n}-1$, since $u_{\mathrm{SWAP}}=(\frac{\sqrt{2}(1+\ci)}{4},\frac{\sqrt{2}(1+\ci)}{4},\frac{\sqrt{2}(1+\ci)}{4},\frac{\sqrt{2}(1+\ci)}{4})$ (see~\cref{ex_2qubit}).  
    This is consistent with the LOCC result from~\cite{piv23}.
\item $\gamma(\mathrm{CR}_X(\theta)^{\otimes n}) =  2 (|\cos(\frac{\theta}{4})|+|\sin(\frac{\theta}{4})|)^{2n}-1$, since  $u_{\mathrm{CR}_X(\theta)}=(\cos(\frac{\theta}{4}),\ci \sin(\frac{\theta}{4}),0,0)$.\footnote{$\mathrm{CR}_X(\theta)$ denotes a controlled $\mathrm{R}_X(\theta)$ gate with rotation angle $\theta$, where $\mathrm{R}_X(\theta):=\ee^{-\ci \frac{\theta}{2} X}$.}
This answers an open question from~\cite{piv23}.
\end{enumerate}
\end{example}
\cref{ex_2qubit} and \cref{ex_parallel_cuts} show that $n$ parallel cuts can be considerably cheaper than cutting $n$-times a gate separately. 

\begin{corollary}
    In general, \cref{cor_equation} implies that for any two-qubit gate with $\gamma(U)>1$ the optimal sampling overhead is strictly submultiplicative under the tensor product, i.e. 
    \begin{equation}
        \gamma(U^{\otimes n}) < \gamma(U)^n \, .
    \end{equation}
\end{corollary}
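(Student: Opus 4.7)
The plan is to reduce the claim to a single-variable polynomial inequality and then verify it by a short monotonicity argument.

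First, I would apply \cref{cor_equation} in the uniform case where all factors coincide with $U$ (set $m=p=0$ or just take a single family), yielding
\begin{equation}
    \gamma(U^{\otimes n}) = 2 S^{2n} - 1, \qquad \text{where } S := \sum_{i=0}^3 |u_i|,
\end{equation}
and in particular $\gamma(U) = 2S^{2}-1$. The assumption $\gamma(U) > 1$ then translates into $S > 1$. (Note that the KAK constraint $\sum_i |u_i|^2 = 1$ combined with nonnegativity of the $|u_i|$ gives $S \geq 1$, with equality precisely when a single $|u_i|$ is nonzero, i.e.~when $U$ is local; so $\gamma(U) > 1$ is equivalent to $S > 1$.)

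Next I would set $x = S^{2}$ and rewrite the desired strict inequality $\gamma(U^{\otimes n}) < \gamma(U)^n$ as
\begin{equation}
    (2x - 1)^n \;>\; 2 x^n - 1 \qquad \text{for all } x > 1 \text{ and } n \geq 2.
\end{equation}
Both sides agree at $x = 1$, so it suffices to compare derivatives. Differentiating in $x$, the left-hand side has derivative $2n(2x-1)^{n-1}$ and the right-hand side $2n x^{n-1}$. Since $2x-1 > x$ for $x > 1$, we get $2n(2x-1)^{n-1} > 2n x^{n-1}$ whenever $n \geq 2$ and $x > 1$. Hence the function $f(x) := (2x-1)^n - 2 x^n + 1$ satisfies $f(1) = 0$ and $f'(x) > 0$ on $(1,\infty)$, giving $f(x) > 0$ there. (An equivalent one-line induction argument factors the gap as $(x-1)(x^{n-1}-1) > 0$; I would include whichever is cleanest.)

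The only potentially delicate point is the equivalence $\gamma(U) > 1 \iff S > 1$, which must be handled carefully so that the inequality is really strict in the regime we care about; the rest is an elementary calculus or induction step. Once these are in place, the conclusion $\gamma(U^{\otimes n}) < \gamma(U)^n$ follows directly from the two closed-form expressions provided by \cref{cor_equation}.
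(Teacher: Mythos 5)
Your proposal is correct and takes essentially the same route as the paper: both use \cref{cor_equation} to reduce the claim to the elementary inequality $(2x-1)^n > 2x^n-1$ for $x=\big(\sum_i |u_i|\big)^2>1$, with $\gamma(U)>1$ correctly translated into $x>1$. The only difference is the final step, where the paper argues by induction on $n$ (factoring the gap as $2(x^n-1)(x-1)>0$) while you compare derivatives of the two sides; both verifications are valid.
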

\begin{proof}
    Introducing $x=(\sum_i |u_i|)^2$, \cref{cor_equation} is equivalent to
    \begin{equation}
        \gamma(U^{\ox n}) = 2x^n-1 < (2x-1)^n = \gamma(U)^n \, .
    \end{equation}
    Since $\gamma(U)>1$, we have $x>1$.
    The inequality then follows by induction
    \begin{align}
        (2x-1)^{n+1} &=  (2x-1)^{n} (2x-1) \\
        &> (2x^n-1) (2x-1) \\
        &= 4x^{n+1} - 2x - 2x^n +1\\
        &= 2x^{n+1} -1 + 2x^{n+1} - 2x - 2x^n + 2 \\
        &= \gamma(U^{\ox (n+1)}) + 2x^n(x-1) - 2(x-1) \\
        &= \gamma(U^{\ox (n+1)}) + 2(x^n-1)(x-1) \\
        &> \gamma(U^{\ox (n+1)}) \, .
    \end{align}
\end{proof}

In the following, we want to consider the asymptotic difference between the single-cut and parallel cut settings.
We therefore introduce the following quantity:
\begin{definition}[Regularized $\gamma$-factor]
    The regularized $\gamma$-factor is defined as
    \begin{equation}
        \gamma_R(U) = \lim_{{n \to \infty}} (\gamma(U^{\ox n}))^{\frac{1}{n}} \, .
    \end{equation}
\end{definition}
$\gamma_R(U)$ can be considered as the ``effective $\gamma$-factor'' of a single gate in the parallel cut setting, in the limit of infinite gates.

\begin{corollary}\label{cor_regularized_gamma}
    Using~\cref{cor_equation}, $\gamma_R(U)$ evaluates to
    \begin{align}
        \gamma_R(U) &= \lim_{{n \to \infty}} \left(2(\sum_i |u_i|)^{2n} -1 \right)^{\frac{1}{n}} 
        = \left(\sum_i |u_i|\right)^{2} 
        = 1 + \sum_{i\ne j} |u_i||u_j| \, .
    \end{align}
\end{corollary}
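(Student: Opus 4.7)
The plan is to start from the closed form for $\gamma(U^{\otimes n})$ already established in~\cref{cor_equation} (applied with $V=W=\mathds{1}$, i.e.\ $m=p=0$), which gives
\begin{equation*}
    \gamma(U^{\otimes n}) = 2 x^n - 1, \qquad x := \Big(\sum_{i=0}^{3} |u_i|\Big)^{2}.
\end{equation*}
So the task reduces to evaluating $\lim_{n\to\infty}(2x^n-1)^{1/n}$ and then rewriting $x$ in the form claimed.

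First I would isolate the dominant term: for $x>1$, factor out $x^n$ to write $(2x^n-1)^{1/n} = x\,(2-x^{-n})^{1/n}$. Since $1 \le 2 - x^{-n} \le 2$ for all $n$, we have $1 \le (2-x^{-n})^{1/n} \le 2^{1/n}$, and both bounds tend to $1$, so by the squeeze principle the limit equals $x$. The only thing to check is that we really are in the regime $x\ge 1$: the KAK constraint $\sum_i |u_i|^2 = 1$ together with the inequality between the $\ell^1$ and $\ell^2$ norms on $\C^4$ gives $\sum_i |u_i| \ge 1$, hence $x\ge 1$. When $x=1$ the unitary is local (a single $|u_i|=1$ and the rest vanish), so $\gamma(U^{\otimes n}) = 1$ for all $n$ and the limit is trivially $1 = x$; otherwise $x>1$ and the argument above applies.

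Finally, the identity $x = 1 + \sum_{i\ne j}|u_i||u_j|$ is immediate by expanding the square:
\begin{equation*}
    \Big(\sum_{i=0}^{3} |u_i|\Big)^{2} = \sum_{i=0}^{3}|u_i|^2 + \sum_{i\ne j}|u_i||u_j| = 1 + \sum_{i\ne j}|u_i||u_j|,
\end{equation*}
using $\sum_i |u_i|^2 = 1$. There is no substantive obstacle here; the only subtlety is correctly handling the boundary case of a local unitary, which the argument above covers cleanly.
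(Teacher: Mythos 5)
Your proposal is correct and matches the paper's (implicit) argument: the paper derives \cref{cor_regularized_gamma} directly from \cref{cor_equation} in exactly this way, with the limit $\lim_n (2x^n-1)^{1/n}=x$ and the expansion of $\bigl(\sum_i |u_i|\bigr)^2$ using $\sum_i|u_i|^2=1$. Your explicit squeeze argument and the handling of the boundary case $x=1$ are fine additions but not a different route.
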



\section{Cutting KAK-like unitaries}\label{sec_master_theorem}
Previous chapters relied on the KAK-decomposition of two-qubit unitaries. As discussed in~\cref{ssec:kak}, larger unitaries can in general not be represented in such a form. However, one can still consider the class of unitaries that are in a similar form to the two-qubit KAK decomposition. For such unitaries, we formulate the following theorem:
\begin{theorem}[Cutting KAK-like unitaries]\label{Master}
    Let $U_{AB}$ be a unitary of the form 
\begin{equation}
    U_{AB} = \left(V_A^{(1)}\otimes V_B^{(2)}\right)  \sum_{k} u_{k} (L_k)_A \otimes (R_k)_B  \left(V_A^{(3)}\otimes V_B^{(4)}\right)  \, ,
\end{equation}
where $V_A^{(1)},  V_B^{(2)}, V_A^{(3)}, V_B^{(4)}, (L_k)_A$ and $(R_k)_B$ are unitaries on $A$ and $B$. 

Then it holds that 
\begin{align}
    \gamma_{\LO}(U) \leq \sum_k |u_k|^2 + 2 \sum_{k \ne k'} |u_k| |u_{k'}| \, . \label{eq_ineq}
\end{align}
Furthermore, if $(L_k)_A$ and $(R_k)_B$ are orthogonal~\footnote{Orthogonal with respect to the Hilbert-Schmidt inner product $\langle A,B\rangle =  \tr(A^\dagger B)$} for different $k$, then~\cref{eq_ineq} simplifies, holds with equality 
and we find $\gamma_{\LO}(U) = \gamma_{\LOCC}(U)$:
\begin{align}
    \gamma_{\LO}(U) = 1 + 2 \sum_{k \ne k'} |u_k| |u_{k'}| \, . 
\end{align}

\end{theorem}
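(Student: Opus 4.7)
The plan is to prove the inequality by constructing an explicit quasiprobability decomposition of the channel $\cW(\rho) := U\rho U^\dagger$ into local operations with coefficient sum $1 + 2\sum_{k \ne k'} |u_k||u_{k'}|$, and then to recover the equality under orthogonality via a matching dual witness. As a first reduction, the local unitaries $V_A^{(j)}, V_B^{(j)}$ can be absorbed into the local maps on either side of the cut without changing $\gamma_{\LO}$, and the phases $\arg(u_k)$ can be absorbed into the $L_k$ (which remain unitary), so we may assume $W = \sum_k u_k L_k \otimes R_k$ with $u_k \geq 0$. Expanding then separates $\cW$ into a diagonal part $\sum_k u_k^2 \cC_k$, where each $\cC_k(\rho) = (L_k \otimes R_k)\rho(L_k \otimes R_k)^\dagger$ is a local unitary channel contributing cost $\sum_k u_k^2$, plus a sum over unordered pairs $\{k,k'\}$ of cross terms of the form $u_k u_{k'}[(L_k \otimes R_k)\rho(L_{k'}\otimes R_{k'})^\dagger + \mathrm{h.c.}]$.

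The key technical step is decomposing each off-diagonal pair at cost $4 u_k u_{k'}$. For this I would introduce two CPTP quantum instruments on $A$ with Kraus operators $\bigl\{\tfrac{1}{2}(L_k \pm L_{k'})\bigr\}$ and $\bigl\{\tfrac{1}{2}(L_k \pm i L_{k'})\bigr\}$, together with the analogous ones on $B$. Unitarity of $L_k, L_{k'}$ makes these genuine CPTP maps, and signed differences of their Kraus components define maps $\cL^{\mathrm{sym}}_{kk'}, \cL^{\mathrm{asym}}_{kk'} \in \cN(A)$ and $\cR^{\mathrm{sym}}_{kk'}, \cR^{\mathrm{asym}}_{kk'} \in \cN(B)$ realizing $\tfrac{1}{2}(L_k \cdot L_{k'}^\dagger + L_{k'} \cdot L_k^\dagger)$ and $\tfrac{1}{2i}(L_k \cdot L_{k'}^\dagger - L_{k'} \cdot L_k^\dagger)$, respectively (and analogously for the $R$-operators). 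A direct expansion then yields the identity
\begin{equation}
(\cL^{\mathrm{sym}}_{kk'} \otimes \cR^{\mathrm{sym}}_{kk'} - \cL^{\mathrm{asym}}_{kk'} \otimes \cR^{\mathrm{asym}}_{kk'})(\rho) = \tfrac{1}{2}\bigl[(L_k \otimes R_k)\rho(L_{k'}\otimes R_{k'})^\dagger + (L_{k'}\otimes R_{k'})\rho(L_k \otimes R_k)^\dagger\bigr],
\end{equation}
the crucial feature being that ``mismatched'' cross-terms of the form $(L_k \otimes R_{k'})\rho(L_{k'}\otimes R_k)^\dagger$ enter with the same sign in the symmetric tensor but opposite sign in the antisymmetric one, and hence cancel upon subtraction. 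Multiplying through by $2 u_k u_{k'}$ realizes the pair at cost $4 u_k u_{k'}$ within $\LO(A,B)$; summing over unordered pairs gives off-diagonal cost $2\sum_{k \ne k'} u_k u_{k'}$, and combining with the diagonal cost together with the unitarity constraint $\sum_k u_k^2 \leq 1$ (with equality under Hilbert-Schmidt orthogonality of $\{L_k\}$ and $\{R_k\}$) yields the stated upper bound.

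For the equality statement, I would derive the matching lower bound from the dual formulation of $\gamma_{\LO}$: exhibit a Hermiticity-preserving linear functional on channels that evaluates to $1 + 2\sum_{k \ne k'}|u_k||u_{k'}|$ on $\cW$ while having norm at most $1$ on $\LO(A,B)$. A natural candidate is built from the Hilbert-Schmidt dual basis of $\{L_k \otimes R_k\}$, paired with signs matching the phases of $u_k \bar u_{k'}$. The main obstacle will be verifying the $\LO$-norm bound, which is exactly where the orthogonality of the $L_k$ (and the $R_k$) enters essentially: it is what decouples the contributions from distinct pairs and prevents ``interference'' terms that could otherwise inflate the witness value on some element of $\LO$.
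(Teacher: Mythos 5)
Your upper-bound construction is correct and is in substance the same decomposition as the paper's: after absorbing the phases, your instruments with Kraus operators $\tfrac12(L_k\pm L_{k'})$ and $\tfrac12(L_k\pm i L_{k'})$ are exactly the paper's channels $\cC_{k,k'}^{\phi}$ and $\cC_{k,k'}^{\phi+\pi/2}$ (there the phase is kept explicit rather than absorbed), and the cancellation of the mismatched terms between the two tensor products is the same mechanism, giving cost $4|u_k||u_{k'}|$ per unordered pair plus $\sum_k|u_k|^2$ for the diagonal part. One small caution: your parenthetical ``unitarity constraint $\sum_k u_k^2\le 1$'' is not actually implied by unitarity of $U$ when the $L_k,R_k$ are not orthogonal (e.g.\ $\mathds{1}=2\,\mathds{1}\ox\mathds{1}-\mathds{1}\ox\mathds{1}$ has $\sum_k|u_k|^2=5$); what the construction really yields is $\gamma_{\LO}(U)\le \sum_k|u_k|^2+2\sum_{k\ne k'}|u_k||u_{k'}|$, which coincides with the stated bound precisely when $\sum_k|u_k|^2=1$ (as in the KAK setting and whenever orthogonality holds). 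The paper makes the same implicit normalization, so this does not separate you from it, but you should state the hypothesis rather than attribute it to unitarity.

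The genuine gap is in the equality (lower-bound) direction. You only sketch a dual-witness strategy and explicitly defer its decisive step: you neither specify the functional beyond ``built from the Hilbert--Schmidt dual basis'' nor verify that it has value at most $1$ on every element of $\LO(A,B)$, and you yourself flag this verification as ``the main obstacle.'' That verification is exactly the hard content of the lower bound, and it is far from routine: elements of $\cN(A)\ox\cN(B)$ include signed differences of instrument branches, and naive Hilbert--Schmidt pairings generically violate the norm bound. The paper avoids this entirely by a different route: it invokes \cite[Lemma 4.1]{piv23} to reduce to the Choi state, $\gamma_{\LOCC}(W_{AB})\ge\gamma_{\LOCC}(\ket{\Phi_W})=2\big(\sum_k\alpha_k\big)^2-1$, and then uses the orthogonality $\tr(L_kL_{k'})=\dim(A)\delta_{k,k'}$, $\tr(R_kR_{k'})=\dim(B)\delta_{k,k'}$ to show that the Schmidt coefficients $\alpha_k$ of $\ket{\Phi_W}$ are exactly $|u_k|$, which gives $1+2\sum_{k\ne k'}|u_k||u_{k'}|$ as a lower bound even against $\LOCC$ (which the paper needs for its corollaries; your LO-only witness, even if completed, would prove a strictly weaker statement). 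As it stands, your proposal establishes the inequality but not the equality claim of the theorem.
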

The assertion of~\cref{Master} follows from the following two lemmas, since $\gamma_{\LOCC}(U) \leq \gamma_{\LO}(U)$.

\begin{lemma} \label{lem_upper_bound}
Let $U_{AB}$ be of the same form as in~\cref{Master}. It then holds that
\begin{align}
\gamma_{\LO}(U) \leq \sum_k |u_k|^2 + 2 \sum_{i \ne j} |u_i| |u_j| \, . 
\end{align}
\end{lemma}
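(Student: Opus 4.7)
The plan is to construct an explicit $\LO$-QPD of $\cE(\rho):=U\rho U^\dagger$ whose $\ell_1$-cost meets the stated bound. Since $V_A^{(1)}\otimes V_B^{(2)}$ and $V_A^{(3)}\otimes V_B^{(4)}$ are local unitaries, they can be absorbed into the local factors of any $\LO$ decomposition and do not change $\gamma_{\LO}$. Hence it suffices to cut $W=\sum_k u_k (L_k)_A\otimes(R_k)_B$; after absorbing each phase $u_k/|u_k|$ into $L_k$, I may also take $u_k\ge 0$, and I set $P_k:=L_k\otimes R_k$.

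Expanding $\cE(\rho)=\sum_{k,k'}u_ku_{k'}P_k\rho P_{k'}^\dagger$ and grouping by ordered vs.\ unordered index pairs, I write $\cE$ as (i) a diagonal part $\sum_k u_k^2\,P_k(\cdot)P_k^\dagger$---a nonnegative combination of local unitary channels lying in $\LO(A{:}B)$ with $\ell_1$-weight $\sum_k u_k^2$---plus (ii) for each unordered pair $\{k,k'\}$ with $k\ne k'$, the Hermitian-preserving cross-term $u_ku_{k'}\bigl[P_k(\cdot)P_{k'}^\dagger+P_{k'}(\cdot)P_k^\dagger\bigr]$. Unitarity of $W$ together with Hilbert--Schmidt orthonormality of $\{P_k\}$ (which holds in all of the downstream applications of this lemma) gives $\sum_k u_k^2=1$, accounting for the ``$1$'' in the final bound.

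For each off-diagonal pair I derive the tensor-factorisation
\begin{equation*}
P_k(\cdot)P_{k'}^\dagger+P_{k'}(\cdot)P_k^\dagger=\tfrac12\bigl(\cS_A\otimes\cS_B+\cD_A\otimes\cD_B\bigr),
\end{equation*}
where $\cS_A(\sigma):=L_k\sigma L_{k'}^\dagger+L_{k'}\sigma L_k^\dagger$, $\cD_A(\sigma):=L_k\sigma L_{k'}^\dagger-L_{k'}\sigma L_k^\dagger$, and analogously for $\cS_B,\cD_B$. The polarisation identity gives $\cS_A=2(\cA_+^A-\cA_-^A)$ and $\cD_A=2i(\cB_+^A-\cB_-^A)$ with $\cA_\pm^A,\cB_\pm^A$ the CP sub-maps defined by the Kraus operators $\tfrac12(L_k\pm L_{k'})$ and $\tfrac12(L_k\pm iL_{k'})$, respectively; unitarity of $L_k,L_{k'}$ makes each $\pm$-pair sum to a $\CPTP$ map, placing $\cF_A:=\cA_+^A-\cA_-^A$ and $\cG_A:=\cB_+^A-\cB_-^A$ in $\cN(A)$ (analogously $\cF_B,\cG_B\in\cN(B)$). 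Substituting gives $\cS_A\otimes\cS_B+\cD_A\otimes\cD_B=4\,\cF_A\otimes\cF_B-4\,\cG_A\otimes\cG_B$, so each cross-term becomes $2u_ku_{k'}(\cF_A\otimes\cF_B-\cG_A\otimes\cG_B)$---a signed combination of two $\LO(A{:}B)$ elements with $\ell_1$-weight $4u_ku_{k'}$.

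Summing per-pair costs over unordered $\{k,k'\}$ yields $2\sum_{k\ne k'}|u_k||u_{k'}|$, and adding the diagonal contribution $1$ gives the claimed bound. The main obstacle is the off-diagonal step: deriving the $\cS\otimes\cS+\cD\otimes\cD$ tensor identity from the four-term expansion of the cross contribution and, crucially, checking the $\cN$-membership of $\cF_A,\cG_A,\cF_B,\cG_B$---i.e.\ that each ``$\pm$'' Kraus pair forms a valid $\CPTP$ instrument. This is precisely where unitarity of the $L_k$ and $R_k$ is used (through the identities $(L_k+L_{k'})^\dagger(L_k+L_{k'})+(L_k-L_{k'})^\dagger(L_k-L_{k'})=4I$ and its ``$\pm i$'' analogue).
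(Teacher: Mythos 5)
Your proposal is correct and is essentially the paper's own construction: after absorbing the phases of the $u_k$ into the $L_k$, your channels $\cF,\cG$ built from the Kraus pairs $\tfrac12(L_k\pm L_{k'})$ and $\tfrac12(L_k\pm \ci L_{k'})$ coincide with the paper's $\cC_{k,k'}^{\phi}$, $\cD_{k,k'}^{\phi}$ at phases $\phi_{k,k'}/2$ and $\phi_{k,k'}/2+\pi/2$, and your tensor identity for the cross terms is exactly the paper's cancellation of the mixed terms, yielding the same per-pair cost $4|u_k||u_{k'}|$. (Both you and the paper rely on the normalization $\sum_k|u_k|^2=1$, which is implicit in the KAK-like form; you flag this explicitly, which is fine.)
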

\begin{proof}
We prove this statement by explicitly constructing a quasiprobability decomposition of $U$ into $\LO$ which exhibits a one-norm of the quasiprobability coefficients given by $\sum_k |u_k|^2 + 2 \sum_{i \ne j} |u_i| |u_j|$.
This proof is an extension of the decomposition basis derived in~\cite{Mitarai_2021}, which is optimal for some two-qubit unitaries but not for all of them. 
As explained in the introduction~\cref{ssec:kak}, we can directly consider the interaction part $W$ of the unitary $U$ (up to local unitaries $U$ and $W$ are equal). The induced channel of $W$ is given by
\begin{equation}
    \cW(\rho_{AB}) = \sum_{k, k'} u_k u_{k'}^* (L_k)_A \otimes (R_k)_B \rho_{AB} (L_{k'})_A^{\dagger} \otimes (R_{k'})_B^{\dagger} \, .
\end{equation}
By writing the complex factors by its argument and absolute value $u_k = |u_k|\ee^{\ci\phi_k} $, we get
\begin{align}
    &\cW(\rho) = \sum_{k} |u_k|^2 (L_k)_A \otimes (R_k)_B  \rho_{AB} (L_k)_A^{\dagger} \otimes (R_k)_B^{\dagger}  \nonumber \\
     &\hspace{13mm}+  \sum_{k < {k'}} |u_k| |u_{k'}| \big(\ee^{\ci\phi_{k,{k'}}}(L_k)_A \otimes (R_k)_B  \rho_{AB} (L_{k'})_A^{\dagger} \otimes (R_{k'})_B^{\dagger} \nonumber\\
     & \hspace{18mm} + \ee^{-\ci\phi_{k,{k'}}} (L_{k'})_A \otimes (R_{k'})_B \rho_{AB} (L_k)_A^{\dagger} \otimes (R_k)_B^{\dagger} \big) \, , \label{eq_step1}
\end{align}
where $\phi_{k,{k'}} = \phi_k - \phi_{k'}$\footnote{Note that in contrast to \cite{Mitarai_2021}, we describe $e^{i\phi_{k,{k'}}}L_k \otimes R_k \rho L_{k'} \ox R_{k'} + e^{-i\phi_{k,{k'}}} L_{k'} \otimes R_{k'} \rho L_k \otimes R_k$ with local operations instead of $\sigma_k \otimes \sigma_k \rho \sigma_{k'} \otimes \sigma_{k'}$. This will allow us to consider the phase of the $u_k$ to find the optimal decomposition.}.
Implementing the first sum with $\LO$ is trivial, and the difficulty will be the implementation of the term:
\begin{equation}
    \ee^{\ci\phi_{k,{k'}}}(L_k)_A \otimes (R_k)_B  \rho_{AB} (L_{k'})_A^{\dagger} \otimes (R_{k'})_B^{\dagger} + \ee^{-\ci\phi_{k,{k'}}} (L_{k'})_A \otimes (R_{k'})_B \rho_{AB} (L_k)_A^{\dagger} \otimes (R_k)_B^{\dagger} \label{eq_expression}
\end{equation}
To do so, we define the expressions 
    \begin{align}
        ({l_{k,k'}^{\phi}})_\pm = \frac{L_{k} \pm \ee^{-\ci \phi} L_{k'} }{2}  \qquad  \textnormal{and} \qquad ({r_{k,k'}^{\phi}})_\pm = \frac{R_{k} \pm \ee^{-\ci \phi} R_{k'}}{2} \, ,
    \end{align}
and based on them the maps
    \begin{align}
        ({\cC_{k,k'}^{\phi}})_\pm(\rho) = ({l_{k,k'}^{\phi}})_\pm \, \rho \,  ({l_{k,k'}^{\phi}})_\pm^{\dagger} \qquad \textnormal{and}  \qquad
        ({\cD_{k,k'}^{\phi}})_\pm(\rho) = ({r_{k,k'}^{\phi}})_\pm \, \rho \, ({r_{k,k'}^{\phi}})_\pm^{\dagger} \, .
    \end{align}
Under usage of the \emph{negativity} trick, we then define the channels
    \begin{align}
        {\cC_{k,k'}^{\phi}}(\rho) = ({\cC_{k,k'}^{\phi}})_+(\rho) -({\cC_{k,k'}^{\phi}})_-(\rho) \quad \textnormal{and} \quad
        {\cD_{k,k'}^{\phi}}(\rho) = ({\cD_{k,k'}^{\phi}})_+(\rho) -({\cD_{k,k'}^{\phi}})_-(\rho)   \, ,
    \end{align}
which evaluate to
    \begin{align}
        {\cC_{k,k'}^{\phi}}(\rho) = \frac{1}{2} \big(\ee^{\ci\phi} L_k \rho L_{k'}^{\dagger}  + \ee^{-\ci\phi} L_{k'} \rho L_k^{\dagger} \big)\quad \textnormal{and} \quad
        {\cD_{k,k'}^{\phi}}(\rho) &= \frac{1}{2} \big(\ee^{\ci\phi} R_k \rho R_{k'}^{\dagger}  + \ee^{-\ci\phi} R_{k'} \rho R_k^{\dagger} \big)  \, .
    \end{align}
This expression already resembles~\cref{eq_expression}, but is so far only a local
channel.\footnote{One can check that $({\mathcal{C}_{k,{k'}}^\phi})_\pm, ({\mathcal{D}_{k,{k'}}^\phi})_\pm \succeq 0$ and $({\mathcal{C}_{k,{k'}}^\phi})_+ +  ({\mathcal{C}_{k,{k'}}^\phi})_- , ({\mathcal{D}_{k,{k'}}^\phi})_+ +  ({\mathcal{D}_{k,{k'}}^\phi})_-\in \CPTP$ are fulfilled.} 
In order to obtain~\cref{eq_expression}, we apply these channels in parallel. This gives us
\begin{align}
({\cC_{k,k'}^{\phi}} \otimes {\cD_{k,k'}^{\phi}}) (\rho) 
&= \frac{1}{4} \ee^{2\ci\phi} (L_k \otimes R_k) \rho (L_{k'}^{\dagger} \otimes R_{k'}^{\dagger}) + \frac{1}{4}\ee^{-2\ci\phi} (L_{k'} \otimes R_{k'}) \rho (L_k^{\dagger} \otimes R_k^{\dagger}) \nonumber \\
&\hspace{5mm}+ \frac{1}{4} (L_k \otimes R_{k'}) \rho (L_{k'}^{\dagger} \otimes R_k^{\dagger}) + \frac{1}{4} (L_{k'} \otimes R_k) \rho (L_k^{\dagger} \otimes R_{k'}^{\dagger}) \, ,
\end{align}
which resembles~\cref{eq_expression} even more, but we still have mixed terms we want to get rid of.
We do so by considering the operation ${\cC_{k,k'}^{\phi+\frac{\pi}{2}}}$ and ${\cD_{k,k'}^{\phi+\frac{\pi}{2}}}$. This introduces a global and a relative phase shift. For instance for ${\cC_{k,k'}^{\phi+\frac{\pi}{2}}}$, we get
\begin{align}
    \mathcal{C}_{k,{k'}}^{\phi+\frac{\pi}{2}}(\rho) 
    = \frac{1}{2} \ee^{\ci\frac{\pi}{2}} (L_k \rho L_{k'}^{\dagger} \ee^{\ci\phi}+ \ee^{-i\pi} L_{k'} \rho L_k^{\dagger} \ee^{-\ci\phi}) 
    = \frac{1}{2} \ee^{\ci\frac{\pi}{2}} (L_k \rho L_{k'}^{\dagger} \ee^{\ci\phi} - L_{k'} \rho L_k^{\dagger} \ee^{-\ci\phi}) \, ,
\end{align}
which extended to both parties yields
\begin{align}
    (\mathcal{C}_{k,{k'}}^{\phi+\frac{\pi}{2}} \otimes \mathcal{D}_{k,{k'}}^{\phi+\frac{\pi}{2}}) (\rho) 
    &= -\frac{1}{4} \ee^{2\ci\phi}(L_k \otimes R_k) \rho (L_{k'}^{\dagger} \otimes R_{k'}^{\dagger}) - \frac{1}{4}\ee^{-2\ci\phi} (L_{k'} \otimes R_{k'}) \rho (L_k^{\dagger} \otimes R_k^{\dagger}) \nonumber \\
     &\hspace{4mm}+ \frac{1}{4} (L_k \otimes R_{k'}) \rho (L_{k'}^{\dagger} \otimes R_k^{\dagger}) + \frac{1}{4} (L_{k'} \otimes R_k) \rho (L_k^{\dagger} \otimes R_{k'}^{\dagger}) \, .
\end{align}
This allows us to formulate the desired term as
\begin{align}
   & 2 \left( \cC_{k,{k'}}^{\frac{\phi_{k,{k'}}}{2}}\otimes\cD_{k,{k'}}^{\frac{\phi_{k,{k'}}}{2}}  - \cC_{k,{k'}}^{\frac{\phi_{k,{k'}}}{2}+\frac{\pi}{2}} \ox \cD_{k,{k'}}^{\frac{\phi_{k,{k'}}}{2}+\frac{\pi}{2}} \right) \nonumber \\
   &\hspace{20mm}= 
    \ee^{\ci\phi_{k,{k'}}}(L_k \otimes R_k) \rho (L_{k'}^{\dagger} \otimes R_{k'}^{\dagger}) + \ee^{-\ci\phi_{k,{k'}}} (L_{k'} \otimes R_{k'}) \rho (L_k^{\dagger} \otimes R_k^{\dagger})  \, .
\end{align}
Combining this with~\cref{eq_step1} gives
\begin{align}
    \mathcal{W}(\rho) 
    &= \sum_{k} |u_k|^2 (L_k)_A \otimes (R_k)_B  \rho_{AB} (L_k)_A^{\dagger} \otimes (R_k)_B^{\dagger}  \nonumber \\
    &+ \sum_{k<{k'}} |u_k||u_{k'}| \, 2  \left( \cC_{k,{k'}}^{\frac{\phi_{k,{k'}}}{2}}\otimes\cD_{k,{k'}}^{\frac{\phi_{k,{k'}}}{2}}  - \cC_{k,{k'}}^{\frac{\phi_{k,{k'}}}{2}+\frac{\pi}{2}} \ox \cD_{k,{k'}}^{\frac{\phi_{k,{k'}}}{2}+\frac{\pi}{2}} \right) \, ,
\end{align}
which implies
\begin{equation}
    \gamma_{\LO}(U_{AB}) 
    \leq \sum_{k} |u_k|^2 + 4 \sum_{k<{k'}} |u_k||u_{k'}|
    = \sum_{k} |u_k|^2 + 2 \sum_{k \ne {k'}} |u_k||u_{k'}| \, .
\end{equation}
\end{proof}

This construction that handles the interference terms is also known as Hadamard test and is used with a similar aim in~\cite{BSS16}. It differs however from our manuscript, in that it decomposes the unitary in the Pauli basis which is not necessarily the best decomposition. Therefore it only provides an upper bound on the optimal simulation overhead, which is not tight for most instances.

For convenience, we present the circuits that implement the channels $({\cC_{k,k'}^{\phi}})_\pm(\rho)$ and $({\cD_{k,k'}^{\phi}})_\pm(\rho)$ in \cref{sec:circ} and explain how to apply them.

\begin{lemma} \label{lem_lower_bound}
Let $U_{AB}$ be of the form described in~\cref{Master} and furthermore require orthogonality of $L_k$ and $R_k$ in the following sense:
\begin{align}
    \tr(L_k^\dagger L_{k'}) = \dim(A) \delta_{k,k'} \qquad \textnormal{and} \qquad 
    \tr(R_k^\dagger R_{k'}) = \dim(B) \delta_{k,k'} \, .
\end{align}
Then
\begin{align}
\gamma_{\LOCC}(U) \geq 1 + 2 \sum_{i \ne j} |u_k| |u_{k'}| \, . 
\end{align}
\end{lemma}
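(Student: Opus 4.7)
The plan is to reduce $\gamma_{\LOCC}(\cU)$ to a robustness-of-entanglement quantity of the Choi state of $\cU$ and then apply the classical Vidal--Tarrach formula for pure bipartite states. The target chain of inequalities is
\begin{equation*}
\gamma_{\LOCC}(\cU) - 1 \;=\; 2\, R_{\LOCC\cap\CPTP}(\cU) \;\geq\; 2\, R_s(\rho_\cU) \;=\; 2\Big(\Big(\textstyle\sum_k |u_k|\Big)^{\!2} - 1\Big),
\end{equation*}
where $R_{\LOCC\cap\CPTP}$ denotes channel robustness under $\LOCC\cap\CPTP$ noise, $R_s$ is the robustness of entanglement of states, and $\rho_\cU$ is the Choi state of the channel induced by $U_{AB}$.

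The first equality is a standard reformulation. Given any signed decomposition $\cU = \sum_i a_i \cF_i$ with $\cF_i = \cA_i^+ - \cA_i^-$ as in the paper's definition of $\LOCC$, I regroup positive and negative CP pieces into maps $\cG^\pm$ so that $\cU = \cG^+ - \cG^-$. Trace preservation of $\cU$ and of each underlying instrument $\cA_i^+ + \cA_i^- \in \LOCC\cap\CPTP$ forces $\tr(\cG^+(\rho)) - \tr(\cG^-(\rho)) = 1$ and $\tr(\cG^+(\rho)) + \tr(\cG^-(\rho)) = \sum_i |a_i|$ for every input state $\rho$. Setting $\beta = (\sum_i |a_i| - 1)/2$ and normalizing yields $\cU = (1+\beta)\cE_+ - \beta\cE_-$ with $\cE_\pm \in \LOCC\cap\CPTP$, establishing $\gamma_{\LOCC}(\cU) = 1 + 2R_{\LOCC\cap\CPTP}(\cU)$.

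The central inequality passes to Choi states via Choi--Jamio\l{}kowski: the decomposition above becomes $\rho_\cU = (1+\beta)\rho_{\cE_+} - \beta\rho_{\cE_-}$ on the bipartition $AA':BB'$. Since LOCC channels are separable, both $\rho_{\cE_\pm}$ are separable states on this bipartition; enlarging the admissible noise from Choi states of $\LOCC\cap\CPTP$ maps to arbitrary separable states only shrinks the optimal $\beta$, giving $R_{\LOCC\cap\CPTP}(\cU) \geq R_s(\rho_\cU)$. To compute $R_s(\rho_\cU)$, the orthogonality hypotheses $\tr(L_k L_{k'}) = \dim(A)\delta_{k,k'}$ and $\tr(R_k R_{k'}) = \dim(B)\delta_{k,k'}$ imply that the vectors $\ket{\tilde L_k} := (L_k \otimes I)\ket{\Phi^+}_{AA'}/\sqrt{\dim(A)}$ and the analogous $\ket{\tilde R_k}$ form orthonormal families. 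Ignoring the single-qubit factors $V^{(j)}$, which do not affect $\gamma$, the purified Choi vector reads $\ket{\Phi_\cU} = \sum_k u_k \ket{\tilde L_k}_{AA'} \otimes \ket{\tilde R_k}_{BB'}$, a Schmidt decomposition (after absorbing phases) with Schmidt coefficients $|u_k|$. The Vidal--Tarrach formula $R_s(\ket{\psi}) = (\sum_k \lambda_k)^2 - 1$ for pure bipartite states then gives $R_s(\rho_\cU) = (\sum_k|u_k|)^2 - 1 = \sum_{k\neq k'}|u_k||u_{k'}|$, and assembling the chain yields $\gamma_{\LOCC}(\cU) \geq 1 + 2\sum_{k \neq k'}|u_k||u_{k'}|$.

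The main obstacle I anticipate is the careful justification of the reformulation $\gamma_{\LOCC}(\cU) = 1 + 2R_{\LOCC\cap\CPTP}(\cU)$ against the paper's specific definition of $\LOCC$: one must verify that after regrouping the $\cF_i$ into $\cG^\pm$ and renormalizing, the resulting maps $\cE_\pm$ genuinely lie inside $\LOCC\cap\CPTP$, which rests on closure of this set under convex combinations. The subsequent passage to Choi states and the invocation of Vidal--Tarrach are comparatively routine.
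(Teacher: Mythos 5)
Your proof is correct and ends in the same place as the paper's: the orthogonality hypotheses turn the Choi vector of the interaction part into a Schmidt decomposition with coefficients $|u_k|$, and the bound $2(\sum_k|u_k|)^2-1 = 1+2\sum_{k\neq k'}|u_k||u_{k'}|$ follows (both you and the paper implicitly use $\sum_k|u_k|^2=1$ here). The difference lies in how you connect $\gamma_{\LOCC}$ to that Schmidt-coefficient quantity: the paper simply invokes Lemma~4.1 of~\cite{piv23}, which asserts $\gamma_{\LOCC}(W)\geq\gamma_{\LOCC}(\ket{\Phi_W})=2(\sum_i\alpha_i)^2-1$, whereas you re-derive this reduction from scratch by regrouping an arbitrary signed decomposition into $\cU=(1+\beta)\cE_+-\beta\cE_-$, passing to Choi states, and applying the Vidal--Tarrach robustness formula --- which is essentially the content of the cited lemma, so the underlying argument is the same, just made self-contained. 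One repair is needed in your reduction: after regrouping and normalizing, the maps $\cE_\pm$ are generally \emph{not} elements of $\LOCC\cap\CPTP$ (a rescaled instrument element amounts to post-selection, which is not deterministically implementable by LOCC), so the claimed identity $\gamma_{\LOCC}(\cU)=1+2R_{\LOCC\cap\CPTP}(\cU)$ is an overstatement in the direction you need. However, the only property your chain actually uses is that $\cE_\pm$ are trace-preserving separable CP maps, hence have separable Choi states across $AA'{:}BB'$; with that weakening (skipping the intermediate channel-robustness quantity and bounding directly against the robustness of entanglement $R_s(\rho_\cU)$), the inequality $\gamma_{\LOCC}(\cU)-1\geq 2R_s(\rho_\cU)$ and hence the lemma go through unchanged, exactly as you anticipate in your closing remark.
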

\begin{proof}
Let $\Choi$ denote the Choi state of a unitary $U_{AB}$.\footnote{We define the Choi state of $U_A$ as $\Choi = \id_{A'} \ox U_A \ket{\psi}_{A'A}$, where $A'$ is a copy of the Hilbert space $A$ and $\ket{\psi}_{A'A}$ is the maximally entangled state on the joint system $A'A$.}
Simulating the channel $U_{AB}$ is at least as expensive as preparing a state that can be prepared using the channel --- in particular the Choi state. We therefore have
\begin{equation}\label{eq_gammaState}
    \gamma_\LOCC(U_{AB}) = \gamma_\LOCC(W_{AB})  \geq \gamma_\LOCC(\ket{\Phi_W}) \,, 
\end{equation}
where we introduced the $\gamma$-factor for states and $W_{AB}$ denotes the interaction unitary. Furthermore, the Choi state of $W_{AB}$ is denoted by $\ket{\Phi_W}$. 
The $\gamma$-factor for states is defined by
\begin{align}
      \gamma_S(\rho_{AB}) \coloneqq \min \Big\{ \gamma_S(\cE) | \cE(\proj{0}_{AB}) = \rho_{AB} \Big\} \, ,
\end{align}
but as was shown in~\cite[Lemma 4.1]{piv23}, for pure states it can be written as a function of the Schmidt coefficients. One finds
\begin{equation}
    \gamma_\LOCC(\ket{\Phi_W}) = 2\left(\sum_i \alpha_i \right)^2-1 \,, 
\end{equation}
where the $\alpha_i$ are the Schmidt coefficients of $\ket{\Phi_W}$.
We can calculate these concretely in terms of $u_i$. The Choi state is then given as
\begin{align}
        \ket{\Phi_W}_{A'B'AB} &= \frac{1}{\sqrt{\dim(A)\dim(B)}} \sum_{i=1}^{\dim(A)} \sum_{j=1}^{\dim(B)} \ket{i,j}_{A'B'} \ox W\ket{i,j}_{AB} \\
        & = \frac{1}{\sqrt{\dim(A)\dim(B)}}  \sum_{i=1}^{\dim(A)} \sum_{j=1}^{\dim(B)} \sum_k u_{k}  \ket{i,j}_{A'B'} \ox (L_k\ox R_k) \ket{i,j}_{AB} \\
        &= \sum_{k} |u_{k}| \ket{\varphi_{k}^{(1)}}_{A'A} \ox \ket{\varphi_{k}^{(2)}}_{B'B} \, ,
    \end{align}
where $\ket{\varphi_k}_{A'A}$ and $\ket{\varphi_k}_{B'B}$ are orthonormal vectors given by
\begin{align}
    \ket{\varphi_k^{(1)}}_{A'A} &= \frac{1}{\sqrt{\dim(A)}}\sum_{j=1}^{\dim(A)}\ket{j}_{A'}L_k\ket{j}_A \qquad \textnormal{and} \\
    \ket{\varphi_k^{(2)}}_{B'B} &= \frac{1}{\sqrt{\dim(B)}}\arg(u_k) \sum_{j=1}^{\dim(B)}  \ket{j}_{B'}  R_k \ket{j}_{B} \, .
\end{align}
Orthonormality follows from the above trace properties of $L_k$ and $R_k$.
In this form, the Choi state is in its Schmidt decomposition and we can identify the Schmidt coefficients $\alpha_i$ as $|u_i|$. Using \cref{eq_gammaState} and an analogous calculation to the proof of \cref{co_bounded}, we get
\begin{equation}
    \gamma_\LOCC \geq 2\left(\sum_{i=0}^3 |u_i|\right)^2-1 = 1 + 2\sum_{i\ne j} |u_i||u_j| \, ,
\end{equation}
which concludes the proof.
\end{proof}

As an additional application of~\cref{Master} and our complete understanding of the parallel cut setting, we show that one can leverage this theorem to explicitly find decompositions of unitaries that consist of multiple two-qubit unitaries interleaved by arbitrary, but known\footnote{Note that we are not referring to the black box setting here.} operations as depicted in~\cref{fig:cut}.
In general, we will not be able to give the optimal decomposition for these scenarios as this would rely on our ability to cut arbitrary unitaries, however we can still provide a (not necessarily optimal) decomposition with an overhead that is given by the optimal decomposition of the parallel cut. This decomposition will depend on the interleaved operations.
\begin{figure}[!htb]
    \centering
        \begin{tikzpicture}[thick,scale=0.8]
    \def\x{0}
    \def\b{7}
    \def \s{2.5}

     \draw [fill=gray!15,draw=none] (\x+0.05,0.06) rectangle (\x+\b,1.9);   
     \draw [fill=cyan!15,draw=none] (\x+0.05,-0.06) rectangle (\x+\b,-1.9);

     \draw [fill=gray!15,draw=none] (\s+0.1,1.4+0.8) rectangle (\s+0.1+0.25,1.4+0.25+0.8);
     \draw [fill=cyan!15,draw=none] (\s+1.6,1.4+0.8) rectangle (\s+1.6+0.25,1.4+0.25+0.8);     
     \node[gray] at (\s + 0.6,1.9+0.4) {\footnotesize{$A$}};
     \node[cyan] at (\s + 2.1,1.9+0.4) {\footnotesize{$B$}};

\node at (\x+\b/2,0) {$-----------------$};


     \draw (\x,0.5) -- (\x+\b+0.05,0.5);
     \draw (\x,-0.5) -- (\x+\b+0.05,-0.5); 
     
     \draw (\x,0.9) -- (\x+\b+0.05,0.9);
     \draw (\x,-0.9) -- (\x+\b+0.05,-0.9);

     \draw (\x,1.3) -- (\x+\b+0.05,1.3);
     \draw (\x,-1.3) -- (\x+\b+0.05,-1.3);

     \draw (\x,1.7) -- (\x+\b+0.05,1.7);
     \draw (\x,-1.7) -- (\x+\b+0.05,-1.7);

     \def \xx{0.5}
     \draw [fill=yellow!50] (\xx-0.36,1.85) rectangle (\xx+0.36,0.1); 
     \node at (\xx+0.03,0.9) {\scriptsize{$V_A^{(3)}$}};
     \draw [fill=yellow!50] (\xx-0.36,-0.1) rectangle (\xx+0.36,-1.85); 
     \node at (\xx,-0.9) {\scriptsize{$W_B^{(3)}$}};

     \def \xx{1.5+\x}
     \draw [fill=red!40] (\xx,0.9) circle (3mm);
     \node at (\xx+0.02,0.9) {\scriptsize{$U_3$}};
     \draw (\xx,0.9-0.3) -- (\xx,-0.5+0.3);
     \draw [fill=red!40](\xx,-0.5) circle (3mm);
     \node at (\xx+0.02,-0.5) {\scriptsize{$U_3$}};

     \def \xx{2.5}
     \draw [fill=yellow!50] (\xx-0.36,1.85) rectangle (\xx+0.36,0.1); 
     \node at (\xx+0.03,0.9) {\scriptsize{$V_A^{(2)}$}};
     \draw [fill=yellow!50] (\xx-0.36,-0.1) rectangle (\xx+0.36,-1.85); 
     \node at (\xx+0.01,-0.9) {\scriptsize{$W_B^{(2)}$}};

     \def \xx{3.5}
     \draw [fill=ForestGreen!40] (\xx,0.5) circle (3mm);
     \node at (\xx+0.02,0.5) {\scriptsize{$U_2$}};
     \draw (\xx,0.5-0.3) -- (\xx,-0.9+0.3);
     \draw [fill=ForestGreen!40](\xx,-0.9) circle (3mm);
     \node at (\xx+0.02,-0.9) {\scriptsize{$U_2$}};

     \def \xx{4.5}
     \draw [fill=yellow!50] (\xx-0.36,1.85) rectangle (\xx+0.36,0.1); 
     \node at (\xx+0.03,0.9) {\scriptsize{$V_A^{(1)}$}};
     \draw [fill=yellow!50] (\xx-0.36,-0.1) rectangle (\xx+0.36,-1.85); 
     \node at (\xx+0.01,-0.9) {\scriptsize{$W_B^{(1)}$}};

     \def \xx{5.5}
     \draw [fill=Violet!40] (\xx,1.3) circle (3mm);
     \node at (\xx+0.02,1.3) {\scriptsize{$U_1$}};
     \draw (\xx,1.3-0.3) -- (\xx,-1.3+0.3);
     \draw [fill=Violet!40](\xx,-1.3) circle (3mm);
     \node at (\xx+0.02,-1.3) {\scriptsize{$U_1$}};
    
     \def \xx{6.5}
     \draw [fill=yellow!50] (\xx-0.36,1.85) rectangle (\xx+0.36,0.1); 
     \node at (\xx+0.03,0.9) {\scriptsize{$V_A^{(0)}$}};
     \draw [fill=yellow!50] (\xx-0.36,-0.1) rectangle (\xx+0.36,-1.85); 
     \node at (\xx+0.01,-0.9) {\scriptsize{$W_B^{(0)}$}};

    \end{tikzpicture}
    \caption{Cut of multiple two-qubit unitaries (here denoted by $U_1,U_2,U_3$) with interleaved operations (denoted by $V_A^{(i)},W_B^{(j)}$).} 
    \label{fig:cut}
\end{figure}
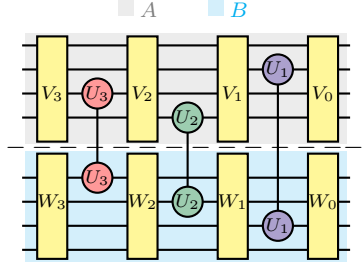

\begin{corollary}[Parallel cut as upper bound]\label{theorem3}
    Let $U$ be a unitary of the form 
    \begin{equation}
        U = (V_A^{(0)}\ox W_B^{(0)}) \prod_{i=1}^n \left(U_{A_i B_i}^{(i)} (V_A^{(i)}\ox W_B^{(i)}) \right) \, ,
    \end{equation}
    where $V^{(i)}_A$ and $W^{(i)}_B$ are arbitrary unitaries on the subsystems $A$ and $B$, respectively and $U_{A_i B_i}$ acts as a two-qubit unitary between $A$ and $B$.
    We then have
    \begin{equation}\label{eq_parallel-as-upper}
        \gamma_\LO(U) \leq \gamma(\bigotimes_{i=1}^n U_i) \, .
    \end{equation}
\end{corollary}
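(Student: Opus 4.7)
The plan is to reduce this to the KAK-like form required by \cref{Master} by absorbing all of the interleaved local unitaries together with the outer single-qubit unitaries from each KAK decomposition into one big product of local operators. Concretely, I would first invoke the KAK decomposition (\cref{eq_standard_dec_u}) on every $U^{(i)}_{A_i B_i}$ to write
\begin{equation*}
    U^{(i)}_{A_i B_i} = \bigl(V^{(1,i)}_{A_i} \otimes V^{(2,i)}_{B_i}\bigr) \Bigl( \sum_{k_i=0}^3 u^{(i)}_{k_i}\, (\sigma_{k_i})_{A_i} \otimes (\sigma_{k_i})_{B_i} \Bigr) \bigl(V^{(3,i)}_{A_i} \otimes V^{(4,i)}_{B_i}\bigr),
\end{equation*}
and then fold the tensor factors $V^{(1,i)}_{A_i} \otimes V^{(2,i)}_{B_i}$ on the $A$- and $B$-sides into $V_A^{(i)}\otimes W_B^{(i)}$ (acting on the full $A$ and $B$ registers by tensoring with identity on the other qubits), and similarly for $V^{(3,i)}\otimes V^{(4,i)}$ into $V_A^{(i-1)}\otimes W_B^{(i-1)}$.

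After this absorption, $U$ takes the form
\begin{equation*}
    U = \tilde V_A^{(0)} \otimes \tilde W_B^{(0)} \prod_{i=1}^n \Bigl( \sum_{k_i=0}^3 u^{(i)}_{k_i}\, (\sigma_{k_i})_{A_i} \otimes (\sigma_{k_i})_{B_i} \Bigr) \bigl(\tilde V_A^{(i)} \otimes \tilde W_B^{(i)}\bigr),
\end{equation*}
where the $\tilde V_A^{(i)}, \tilde W_B^{(i)}$ are products of unitaries acting entirely on $A$ or entirely on $B$. Since the terms on $A$ and $B$ commute across the tensor product and the only non-local contributions are the Pauli strings, I can expand the product over $i$ and collect all $A$-side operators (including the Paulis and the intermediate $\tilde V$'s) and all $B$-side operators separately. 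This yields, for each multi-index $\mathbf{k}=(k_1,\dots,k_n)\in\{0,1,2,3\}^n$, unitaries
\begin{equation*}
    \tilde L_{\mathbf{k}} := \tilde V_A^{(0)} \prod_{i=1}^n (\sigma_{k_i})_{A_i}\, \tilde V_A^{(i)}, \qquad \tilde R_{\mathbf{k}} := \tilde W_B^{(0)} \prod_{i=1}^n (\sigma_{k_i})_{B_i}\, \tilde W_B^{(i)},
\end{equation*}
so that $U = \sum_{\mathbf{k}} u_{\mathbf{k}}\, \tilde L_{\mathbf{k}} \otimes \tilde R_{\mathbf{k}}$ with $u_{\mathbf{k}}=\prod_{i=1}^n u^{(i)}_{k_i}$. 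This is exactly the hypothesis of \cref{Master} (with trivial outer local unitaries), so \cref{lem_upper_bound} immediately gives
\begin{equation*}
    \gamma_\LO(U) \leq 1 + 2 \sum_{\mathbf{k}\ne\mathbf{k}'} |u_{\mathbf{k}}||u_{\mathbf{k}'}|.
\end{equation*}

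Finally, the right-hand side is precisely the expression for $\gamma(\bigotimes_{i=1}^n U_i)$ derived in \cref{cor_parallel_cuts}, concluding the proof of \cref{eq_parallel-as-upper}. The main (and really only) subtlety is the bookkeeping in the expansion step: one must check that the Pauli factors on the $A$-side can be freely commuted past the $\tilde W_B^{(i)}$'s on the $B$-side, which holds simply because they act on disjoint tensor factors. No orthogonality of $\tilde L_{\mathbf{k}}$ and $\tilde R_{\mathbf{k}}$ is needed since only the upper bound of \cref{Master} (i.e.\ \cref{lem_upper_bound}) is invoked, and correspondingly we should not expect the inequality \cref{eq_parallel-as-upper} to be tight in general.
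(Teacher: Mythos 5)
Your proposal is correct and follows essentially the same route as the paper's proof: KAK-decompose each $U^{(i)}_{A_iB_i}$, absorb the outer single-qubit unitaries into the interleaved local unitaries, expand the product into $\sum_{\mathbf{k}} u_{\mathbf{k}}\, L_{\mathbf{k}}\otimes R_{\mathbf{k}}$, and apply the upper-bound part of \cref{Master} (i.e.\ \cref{lem_upper_bound}), identifying the resulting bound with the parallel-cut value from \cref{cor_parallel_cuts}. Your closing remark that orthogonality fails in general, so the bound need not be tight, matches the paper's own observation.
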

Since cutting the gates $U_i$ jointly is strictly cheaper than cutting them individually, cutting the whole circuit $U$ is also cheaper than cutting the $U_i$ individually.

\begin{proof}
    We show the upper bound by constructing an explicit decomposition with overhead equal to the parallel cut scenario. 
    Take a unitary $U$ of the form of \cref{theorem3} with two-qubit unitaries $U_{A_i B_i}^{(i)}$. To simplify the notation, we will assume that \smash{$U_{A_i B_i}^{(i)}$} acts on the $i^{th}$ qubit of $A$ and $B$.
    We then perform a KAK decomposition of all $U_{A_i B_i}^{(i)}$, i.e.
    \begin{align}
        U_{A_i B_i}^{(i)} = \left(V_{A_i}^{(1,i)}\ox W_{B_i}^{(2,i)} \right)  \left( \sum_{k=0}^3 u_k^{(i)} (\sigma_k)_{A_i} \ox (\sigma_k)_{B_i} \right)  \left(V_{A_i}^{(3,i)} \ox W_{B_i}^{(4,i)} \right) \,,
    \end{align}
    and absorb the local one-qubit unitaries $V_{A_i}^{(1,i)}, V_{A_i}^{(2,i)}, W_{B_i}^{(1,i)}, W_{B_i}^{(2,i)},$  into the unitaries \smash{$V_A^{(i-1)},W_B^{(i-1)}$} and \smash{$V_A^{(i)},W_B^{(i)}$}. This yields
    \begin{align}
        U &= \sum_{k_1,\ldots,k_n=0}^3 (V_A^{(0)}\ox W_B^{(0)}) \prod_{i=1}^n u^{(i)}_{k_i}   \left({(\sigma_{k_i}})_{A_i}\ox {(\sigma_{k_i}})_{B_i}(V_A^{(i)}\ox W_B^{(i)}) \right) \\
        &= \sum_{k_1,\ldots,k_n=0}^3 \prod_{i=0}^n u^{(i)}_{k_i}  ({L_{k_1,\dots,k_n}})_A \otimes ({R_{k_1,\dots,k_n}})_B \\
        &= \sum_{\mathbf{k}}  u_{\mathbf{k}} \, \,  ({L_{\mathbf{k}}})_A \otimes ({R_{\mathbf{k}}})_B \, ,
    \end{align}
    where we introduced $u_{\mathbf{k}} = \prod_{i=0}^n u^{(i)}_{k_i}$ and the unitaries 
    \begin{align}
        ({L_{\mathbf{k}}})_A &= ({L_{k_1,\dots,k_n}})_A = V_A^{(0)} \prod_{i=1}^n {(\sigma_{k_i}})_{A_i} V_A^{(i)}   \\
        ({R_{\mathbf{k}}})_B &= ({R_{k_1,\dots,k_n}})_B = W_B^{(0)} \prod_{i=1}^n {(\sigma_{k_i}})_{B_i} V_B^{(i)}  \, . 
    \end{align}
    In this form, we see that the first part of~\cref{Master} applies. Since $V_A^{(i)}$ and $V_B^{(i)}$ are arbitrary, the unitaries $L_{\mathbf{k}}$ and $R_{\mathbf{k}}$ are in general not orthogonal and therefore optimality is not necessarily achieved. However, since $u_{\mathbf{k}}$ is a product of KAK coefficients, we still find $\sum_{\mathbf{k}}  |u_{\mathbf{k}}|^2 = 1$.
\end{proof}
\begin{remark} \label{rmk_long}
The following comments concerning~\Cref{theorem3} are worth being mentioned.
\begin{enumerate}[(i)]
\item \cref{theorem3} shows that performing local operations between two-qubit gates only reduces the overhead of the unitary compared to the parallel cut case (\smash{$V_A^{(i)}=W_B^{(i)}=\mathds{1}$} for all $i$). Intuitively, we can understand this since the case that all two-qubit gates act on different qubits is already the case where we create the most entanglement between both partitions $A$ and $B$. Local unitaries cannot increase this entanglement. However, they can reduce it. Consider the case where \smash{$V_A^{(i)}, W_B^{(i)}$} are SWAP gates that make all two-qubit unitaries unitaries between $A$ and $B$ effectively act on the same two qubits.
In this case the overall circuit is equivalent to a single two-qubit unitary and cutting it costs a maximum of 7 and therefore considerably less than cutting all two-qubit gates simultaneously.
\item  The channel \smash{$({\cC_{\mathbf{k},\mathbf{k'}}^{\phi}}\ox{\cD_{\mathbf{k},\mathbf{k'}}^{\phi}})(\rho)$} depends on the interleaved operations \smash{$V_A^{(i)}, W_B^{(i)}$}. If the gates \smash{$V_A^{(i)}, W_B^{(i)}$} are simple, the calculation of the channel decomposition is feasible. However, if the interleaved gates are complicated, e.g.~in the situation where one two-qubit gate at the beginning and one at the end of a long circuit are to be cut simultaneously, finding the $\LO$ decomposition might be difficult. \label{rem_hard}
\end{enumerate}
\end{remark}


\section{Cutting in the black box setting}\label{sec_black_box}
As discussed in~\cref{rmk_long} (\cref{rem_hard}), evaluating the optimal decomposition  of a unitary consisting of two-qubit gates interleaved with arbitrary unitaries might be difficult if the evaluation of the interleaved elements is complicated. Therefore this section will focus on the construction of a decomposition that is independent of the form of the interleaved element. This is what we call the black box cut as shown in~\cref{fig_blackBoxCut}.
We believe that this scenario is also the most useful in practice as it allows us to jointly cut multiple two-qubit gates, which reduces the sampling overhead, without making any assumptions on where in the circuit these gates may lie.

Before we present the theorem that covers this case, let us define the minimal overhead in the black box setting. For simplicity, we will first restrict to the case of two two-qubit gates, but the case with arbitrary number of gates follows analogously.
\begin{definition}[Black box $\gamma$-factor] \label{def_blackbox_factor}
    We define the minimal overhead of two channels that are separated by an arbitrary black box channel using operations $S(A,B)=\{ \LO(A,B),\LOCC(A,B) \}$ as 
  \begin{align}
    \overline{\gamma}_S(\cE_{AB}^1 ,\cE_{AB}^2 ) 
      &\coloneqq \min \Big\{ \sum_{i=1}^m \lvert a_i \rvert :  \cE_{AB}^1 X_{AB} \cE_{AB}^2 = \tr_{E_AE_B} \sum\limits_{i=1}^m a_i \cF_i X_{AB} \cG_i, \nonumber \\ 
      &\hspace{25mm} \forall X_{AB}\in\CPTP, \, m\geq 1, \, \cF_i,\cG_i\in S(AE_A,BE_B) \textnormal{ and } a_i \in \R \Big\} \, .
  \end{align} 
  Here, we allow the introduction of an ancillary system $E_A$ and $E_B$ on which the black box $X$ is not acting. This allows to correlate channels before and after the black box computation.
\end{definition}

We can then formulate our theorem.
\begin{theorem}\label{theorem:blackbox}
    Two two-qubit unitaries $U_{A_1,B_1}$ and $V_{A_2,B_2}$ have a decomposition independent of a black box channel, with optimal black box overhead
    \begin{equation}
        \overline{\gamma}_{\LOCC}(U_{A_1,B_1}, V_{A_2,B_2}) = \overline{\gamma}_{\LO}(U_{A_1,B_1}, V_{A_2,B_2}) = 1 + 2 \sum_{\mathbf{k} \neq\mathbf{k}'} |w_{\mathbf{k}}| |w_{\mathbf{k}'}| \, ,
    \end{equation}
    where $\mathbf{k} \in \{0,1,2,3\}^2$ and $w_{\mathbf{k}} = u_{k_1} v_{k_2}$ is the product of the KAK-coefficients of $U_{A_1,B_1}$ and $V_{A_2,B_2}$.
\end{theorem}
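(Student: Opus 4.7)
The plan is to prove matching lower and upper bounds on $\overline{\gamma}(U, V)$. The lower bound is immediate: setting $X_{AB} = \mathrm{id}_{AB}$ in~\cref{def_blackbox_factor} reduces the black box scenario to a standard quasiprobability decomposition of the parallel cut of $U_{A_1 B_1} \otimes V_{A_2 B_2}$ (any trivial ancillas can be absorbed into the local operations without affecting the overhead), so $\overline{\gamma}(U,V) \geq \gamma_{\LO}(U \otimes V) = 1 + 2\sum_{\mathbf{k}\neq\mathbf{k}'}|w_\mathbf{k}||w_{\mathbf{k}'}|$ by~\cref{cor_parallel_cuts}.

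For the upper bound, I plan to construct an explicit ancilla-assisted $\LO$ decomposition that is independent of the black box $X_{AB}$ and achieves this overhead. After applying the KAK decompositions of $U$ and $V$ and absorbing their single-qubit factors into the adjacent black box (producing a new arbitrary $\CPTP$ map $X'_{AB}$), it suffices to cut the composition $\mathcal{W}_V \circ X' \circ \mathcal{W}_U$ with $W_U = \sum_k u_k \sigma_k^{A_1}\otimes\sigma_k^{B_1}$ and $W_V = \sum_l v_l \sigma_l^{A_2}\otimes\sigma_l^{B_2}$. Expanding the Kraus representations yields $\mathcal{W}_V \circ X' \circ \mathcal{W}_U = \sum_{\mathbf{k},\mathbf{k}'} w_\mathbf{k} w_{\mathbf{k}'}^* \, T_{\mathbf{k},\mathbf{k}'}$ with $w_\mathbf{k} = u_{k_1} v_{k_2}$, where $T_{\mathbf{k},\mathbf{k}'}(\rho) = (\sigma_{k_2}^{A_2}\otimes\sigma_{k_2}^{B_2})\, X'\!\left((\sigma_{k_1}^{A_1}\otimes\sigma_{k_1}^{B_1})\rho(\sigma_{k_1'}^{A_1}\otimes\sigma_{k_1'}^{B_1})\right)(\sigma_{k_2'}^{A_2}\otimes\sigma_{k_2'}^{B_2})$. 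This structure exactly mirrors~\cref{eq_step1} in the proof of~\cref{lem_upper_bound}, except that the $A_1,B_1$-Paulis are applied before $X'$ and the $A_2,B_2$-Paulis after. The diagonal terms $\mathbf{k}=\mathbf{k}'$ are already of the local form $(\cF_A\otimes\cF_B) \circ X' \circ (\cG_A\otimes\cG_B)$ and contribute $\sum_\mathbf{k}|w_\mathbf{k}|^2 = 1$ to the 1-norm.

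To implement each off-diagonal contribution $|w_\mathbf{k}||w_{\mathbf{k}'}|(e^{i\phi_{\mathbf{k},\mathbf{k}'}}T_{\mathbf{k},\mathbf{k}'} + e^{-i\phi_{\mathbf{k},\mathbf{k}'}}T_{\mathbf{k}',\mathbf{k}})$, where $\phi_{\mathbf{k},\mathbf{k}'} = \arg(w_\mathbf{k}) - \arg(w_{\mathbf{k}'})$, I introduce one ancilla qubit $E_A$ on Alice's side and $E_B$ on Bob's side, both prepared in $|+\rangle$. For a free phase parameter $\varphi$, the protocol $\Pi(\varphi)$ is: before $X'$, each party applies a controlled Pauli using their ancilla to select between $\sigma_{k_1}$ and $e^{-i\varphi/2}\sigma_{k_1'}$; after $X'$, they apply a similar controlled Pauli selecting between $\sigma_{k_2}$ and $e^{-i\varphi/2}\sigma_{k_2'}$; the ancillas are then measured in the $\{|+\rangle,|-\rangle\}$ basis and the outcome is weighted by the product of the two signs (negativity trick). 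A direct Kraus calculation analogous to~\cref{lem_upper_bound} shows $\Pi(\varphi)$ realizes $\tfrac{1}{4}\bigl[e^{i\varphi}T_{\mathbf{k},\mathbf{k}'} + e^{-i\varphi}T_{\mathbf{k}',\mathbf{k}} + M\bigr]$, where $M$ collects the ``mixed'' Pauli cross-terms and is independent of $\varphi$ when the two parties use the same $\varphi/2$. Running the protocol once at $\varphi=\phi_{\mathbf{k},\mathbf{k}'}$ and once at $\varphi=\phi_{\mathbf{k},\mathbf{k}'}+\pi$ and taking twice the difference cancels $M$ and yields exactly $e^{i\phi_{\mathbf{k},\mathbf{k}'}}T_{\mathbf{k},\mathbf{k}'} + e^{-i\phi_{\mathbf{k},\mathbf{k}'}}T_{\mathbf{k}',\mathbf{k}}$. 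Since this uses two $\LO(AE_A, BE_B)$ operations with coefficients $\pm 2|w_\mathbf{k}||w_{\mathbf{k}'}|$, summing over $\mathbf{k}<\mathbf{k}'$ yields total off-diagonal 1-norm $4\sum_{\mathbf{k}<\mathbf{k}'}|w_\mathbf{k}||w_{\mathbf{k}'}| = 2\sum_{\mathbf{k}\neq\mathbf{k}'}|w_\mathbf{k}||w_{\mathbf{k}'}|$, matching the lower bound. The main obstacle is verifying $X'$-independence and the mixed-term cancellation; the essential point is that the black box acts trivially on $E_A, E_B$, so any coherent superposition encoded in the ancillas survives $X'$ intact, keeping the Pauli choice made before the black box quantum-mechanically correlated with the choice made after --- a correlation no classical register can maintain without communication.
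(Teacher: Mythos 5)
Your overall strategy is sound, and your upper-bound construction is genuinely different from the paper's. The paper reduces the black box setting to the parallel cut via gate teleportation: the optimal parallel-cut decomposition is applied to the first gate together with halves of locally prepared Bell pairs, the black box acts, and Bell measurements teleport the second gate's action onto $A_2B_2$; the Pauli byproducts are removed by local corrections plus a $\pm 1$ reweighting in post-processing, which is possible without any communication because the sampled indices $\mathbf{k},\mathbf{k}'$ are known. You instead keep the Pauli choices before and after the box coherently correlated through a single $\ket{+}$ control ancilla per party that the box never touches, followed by an $X$-basis measurement with sign weighting --- in effect a ``time-stretched'' realization of the channels $\cC^{\phi}_{k,k'},\cD^{\phi}_{k,k'}$ from \cref{lem_upper_bound} --- and you also make the matching lower bound explicit (the paper only remarks that the black box setting is at least as hard as the parallel cut, which together with \cref{cor_parallel_cuts} gives the same bound). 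Both routes achieve the same overhead; yours uses fewer ancilla qubits and avoids the teleportation machinery, at the price of having to verify the cross-term cancellation across the box yourself.

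There is, however, a concrete slip in that verification. With both parties using the phase $e^{-i\varphi/2}$ in both controlled gates, the terms surviving the $X$-basis sign weighting (the ancilla-off-diagonal blocks) pick up the conjugated phase from all four controlled gates, so the realized map is $\tfrac14\bigl[e^{2i\varphi}T_{\mathbf{k},\mathbf{k}'}+e^{-2i\varphi}T_{\mathbf{k}',\mathbf{k}}+M\bigr]$ rather than your claimed $\tfrac14\bigl[e^{i\varphi}T_{\mathbf{k},\mathbf{k}'}+e^{-i\varphi}T_{\mathbf{k}',\mathbf{k}}+M\bigr]$. In particular $\Pi(\varphi)$ is $\pi$-periodic in $\varphi$, so your prescribed combination $2\bigl(\Pi(\phi_{\mathbf{k},\mathbf{k}'})-\Pi(\phi_{\mathbf{k},\mathbf{k}'}+\pi)\bigr)$ is identically zero. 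The fix mirrors \cref{lem_upper_bound}: either use $e^{-i\varphi/4}$ in each controlled gate, or keep $e^{-i\varphi/2}$ and take $2\bigl(\Pi(\phi_{\mathbf{k},\mathbf{k}'}/2)-\Pi(\phi_{\mathbf{k},\mathbf{k}'}/2+\pi/2)\bigr)$. Your observation that $M$ is $\varphi$-independent when the two parties use equal phases is correct (the phases cancel between Alice and Bob), so with this correction $M$ drops out of the difference, the coefficients and hence the $1$-norm accounting $1+4\sum_{\mathbf{k}<\mathbf{k}'}|w_{\mathbf{k}}||w_{\mathbf{k}'}|$ are unchanged, and the operations remain of the form $\cF_i X \cG_i$ with $\cF_i,\cG_i\in\LO(AE_A,BE_B)$ required by \cref{def_blackbox_factor} (the outer single-qubit KAK factors being absorbed into the $\cF_i,\cG_i$ so the identity holds for every $X$). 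With that repair the proof is complete.
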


To prevent expression from becoming too convoluted, we will introduce a graphical notations of tensor products for the following section.
In this notation we always consider channels which act from both sides on a density matrix. These expression have to be read from the inside, usually $\rho$, to the outside. The vertical dimension denotes that operations act on different subsystems.
\begin{equation}
     \begin{tikzpicture}[thick,scale=0.8]
    \centering
    \def \x{6}
    \def \s{0}

    \node at (0,0) {$\quad \quad \sum_{i,j} (\sigma_i)_A \ox (\sigma_i)_B \rho_{AB} (\sigma_j)_A \ox (\sigma_j)_B
    \overset{\textnormal{notation}}{=} \sum_{i,j}$};

    \def \s{\x-0.7}
    \draw (\s,0.3) -- (\x+0.8,0.3);
    \draw (\s,-0.3) -- (\x+0.8,-0.3);
    
    \draw [fill=gray!15] (\s, 0.5) rectangle (\s+0.4, 0.1);
    \node at (\s+0.2,0.25) {\scriptsize$\sigma_i$};
    
    \draw [fill=cyan!15] (\s, -0.5) rectangle (\s+0.4, -0.1);
    \node at (\s+0.2,-0.3) {\scriptsize$\sigma_i$};

    \draw [fill=white!15] (\x , 0.5) rectangle (\x+0.5,-0.5);
    \node at (\x+0.25,0) {$\rho$};

    \def \s{\x+0.8}
    \draw [fill=gray!15] (\s, 0.5) rectangle (\s+0.4, 0.1);
    \node at (\s+0.2,0.25) {\scriptsize$\sigma_j$};
    
    \draw [fill=cyan!15] (\s, -0.5) rectangle (\s+0.4, -0.1);
    \node at (\s+0.2,-0.3) {\scriptsize$\sigma_j$};

    \end{tikzpicture}   \, .
\end{equation}

\begin{proof}
    The essential ingredient for achieving submultiplicativity is to correlate the channels before and after the black box. We achieve this, by reducing the problem to the parallel cut and by making use of ancillary qubits. These are used to perform Bell measurements similar to the gate teleportation protocol.
    We will illustrate the procedure using the above introduced notation.
    The overall channel we want to implement is given by
    \begin{equation}
            \begin{tikzpicture}[thick,scale=0.8]
    \centering
    \def \x{7}
    \def \s{0}

    \node at (0,0) {$\quad \quad \cE(\rho)  = \cU ( \cX ( \cV (\rho) ) ) = \sum_{k,k',i,l,l'}  u_{k}u_{k'}^*  v_l v_{l'}
    \,$};

    \draw (\x-2,0.3) -- (\x+2.5,0.3);
    \draw (\x-2,-0.3) -- (\x+2.5,-0.3);

    \def \s{\x-2.1}
    
    \draw [fill=gray!15] (\s, 0.5) rectangle (\s+0.5, 0.1);
    \node at (\s+0.25,0.25) {\scriptsize$\sigma_k$};
    
    \draw [fill=cyan!15] (\s, -0.5) rectangle (\s+0.5, -0.1);
    \node at (\s+0.25,-0.3) {\scriptsize$\sigma_k$};

    \def \s{\x-1.4}
    \draw [fill=black!90] (\s , 0.5) rectangle (\s+0.5,-0.5);
    \node[white] at (\s+0.25,0) {$X_i$};

    \def \s{\x-0.7}
    
    \draw [fill=gray!15] (\s, 0.5) rectangle (\s+0.5, 0.1);
    \node at (\s+0.25,0.25) {\scriptsize$\sigma_l$};
    
    \draw [fill=cyan!15] (\s, -0.5) rectangle (\s+0.5, -0.1);
    \node at (\s+0.25,-0.3) {\scriptsize$\sigma_l$};

    \draw [fill=white!15] (\x , 0.5) rectangle (\x+0.5,-0.5);
    \node at (\x+0.25,0) {$\rho$};

    \def \s{\x+0.8}
    \draw [fill=gray!15] (\s, 0.5) rectangle (\s+0.5, 0.1);
    \node at (\s+0.28,0.25) {\scriptsize$\sigma_{l'}$};
    
    \draw [fill=cyan!15] (\s, -0.5) rectangle (\s+0.5, -0.1);
    \node at (\s+0.28,-0.3) {\scriptsize$\sigma_{l'}$};

    \def \s{\x+1.5}
    \draw [fill=black!90] (\s , 0.5) rectangle (\s+0.5,-0.5);
    \node[white] at (\s+0.25,0) {$X_i^\dagger$};

    \def \s{\x+2.2}
    
    \draw [fill=gray!15] (\s, 0.5) rectangle (\s+0.5, 0.1);
    \node at (\s+0.28,0.25) {\scriptsize$\sigma_{k'}$};
    
    \draw [fill=cyan!15] (\s, -0.5) rectangle (\s+0.5, -0.1);
    \node at (\s+0.28,-0.3) {\scriptsize$\sigma_{k'}$};

    \end{tikzpicture} \, ,
    \end{equation}
    
    where we introduce Kraus operators $X_i$ for the black box channel $\cX$.
    To correlate $\cU$ and $\cV$, we introduce additional qubits in the Bell-state \emph{$\ket{\phi_0} = \frac{1}{\sqrt{2}} (\ket{00} + \ket{11}),\, (\phi_0 = \ket{\phi_0}\bra{\phi_0})\,$} for both parties. Then we perform the channel we used for parallel two-qubit gates (cf. \cref{cor_parallel_cuts}) on system and ancillary qubits. 
\begin{equation}
        \begin{tikzpicture}[thick,scale=0.8]
    \centering
    \def \x{5}
    \def \s{0}
    \def \c{1}

    \node at (0,0) {$\quad \quad \cG(\rho) = \sum_{k,l,k',l'}  u_{k}u_{k'}^* v_{l}v_{l'}^* 
    \,$};
    
    \def \a{\x-1.2*\c}
    \def \b{\x+2.2*\c}
    \draw (\a,\c/2) -- (\b,\c/2);
    \draw (\a,-\c/2) -- (\b,-\c/2);
    \draw (\a,3*\c/2) -- (\b,3*\c/2);
    \draw (\a,-\c-\c/2) -- (\b,-\c-\c/2);
    \draw (\a,2*\c+\c/2) -- (\b,2*\c+\c/2);
    \draw (\a,-2*\c-\c/2) -- (\b,-2*\c-\c/2);

    \def \s{\x-1.2*\c}
    
    \draw [fill=gray!15] (\s, 3*\c) rectangle (\s+\c, 2*\c+0.1);
    \node at (\s+\c/2,5*\c/2) {$\sigma_{k}$};
    
    \draw [fill=gray!15] (\s, 1*\c) rectangle (\s+\c, 0.1);
    \node at (\s+\c/2,\c/2) {$\sigma_l$};
    
    \draw [fill=cyan!15] (\s, -1*\c) rectangle (\s+\c, -0.1);
    \node at (\s+\c/2,-\c/2) {$\sigma_l$};

    \draw [fill=cyan!15] (\s, -3*\c) rectangle (\s+\c, -2*\c-0.1);
    \node at (\s+\c/2,-2*\c-\c/2) {$\sigma_{k}$};


    \draw [fill=white!15] (\x , \c) rectangle (\x+\c,-\c);
    \node at (\x+\c/2,0) {$\Tilde{\rho}$};

    \draw [fill=gray!15] (\x, 3*\c) rectangle (\x+\c, \c+0.1);
    \node at (\x+\c/2,2*\c+0.1) {$\phi_0$};

    \draw [fill=cyan!15] (\x, -3*\c) rectangle (\x+\c, -\c-0.1);
    \node at (\x+\c/2,-2*\c) {$\phi_0$};

    \def \s{\x+1.2*\c}

     \draw [fill=gray!15] (\s, 3*\c) rectangle (\s+\c, 2*\c+0.1);
    \node at (\s+\c/2,5*\c/2) {$\sigma_{k'}$};
    
    \draw [fill=gray!15] (\s, 1*\c) rectangle (\s+\c, 0.1);
    \node at (\s+\c/2,\c/2) {$\sigma_{l'}$};
    
    \draw [fill=cyan!15] (\s, -1*\c) rectangle (\s+\c, -0.1);
    \node at (\s+\c/2,-\c/2) {$\sigma_{l'}$};

    \draw [fill=cyan!15] (\s, -3*\c) rectangle (\s+\c, -2*\c-0.1);
    \node at (\s+\c/2,-2*\c-\c/2) {$\sigma_{k'}$};

    \end{tikzpicture}
\end{equation}
Applying this channel comes at a sampling overhead of $1 + 2 \sum_{\mathbf{k} \neq\mathbf{k}'} |w_{\mathbf{k}}| |w_{\mathbf{k}'}|$ and requires only local operations.
After that, we can apply any black box channel $\cX$ as long as it does not act on the ancillary qubits. We will call the so obtained state $\Tilde{\rho}$. In order to teleport the gate from the quasi Choi-state, we perform a measurement in the Bell-basis between the unused part of the ancillary qubit and the qubit on which we want the second unitary to act.
We will illustrate the measurement in the Bell-basis as $\bra{\phi_a}$, where the index $a$ corresponds to the measurement outcome.
The post-measurement state corresponding to the measurement outcomes $a$ and $b$ is then given by
\begin{equation}
        \begin{tikzpicture}[thick,scale=0.8]
    \centering
    \def \x{3}
    \def \s{0}
    \def \c{1}

    \node at (0,0) {$\quad \quad \sum_{k,k'}  u_{k}u_{k'}^* 
    \,$};
    
    \def \a{\x-1.2*\c}
    \def \b{\x+2.2*\c}
    \draw (\a,\c/2) -- (\b,\c/2);
    \draw (\a,-\c/2) -- (\b,-\c/2);
    \draw (\a,3*\c/2) -- (\b,3*\c/2);
    \draw (\a,-\c-\c/2) -- (\b,-\c-\c/2);
    \draw (\a,2*\c+\c/2) -- (\b,2*\c+\c/2);
    \draw (\a,-2*\c-\c/2) -- (\b,-2*\c-\c/2);

    \def \s{\x-1.2*\c}
    
    \draw [fill=gray!15] (\s, 3*\c) rectangle (\s+\c, 2*\c+0.1);
    \node at (\s+\c/2,5*\c/2) {$\sigma_{k}$};
    
    \draw [fill=gray!15] (\s, 2*\c) rectangle (\s+\c, 0.1);
    \node at (\s+\c/2,\c+0.1) {$\bra{\phi_a}$};
    
    \draw [fill=cyan!15] (\s, -2*\c) rectangle (\s+\c, -0.1);
    \node at (\s+\c/2,-\c) {$\bra{\phi_b}$};

    \draw [fill=cyan!15] (\s, -3*\c) rectangle (\s+\c, -2*\c-0.1);
    \node at (\s+\c/2,-2*\c-\c/2) {$\sigma_{k}$};


    \draw [fill=white!15] (\x , \c) rectangle (\x+\c,-\c);
    \node at (\x+\c/2,0) {$\Tilde{\rho}$};

    \draw [fill=gray!15] (\x, 3*\c) rectangle (\x+\c, \c+0.1);
    \node at (\x+\c/2,2*\c+0.1) {$\phi_0$};

    \draw [fill=cyan!15] (\x, -3*\c) rectangle (\x+\c, -\c-0.1);
    \node at (\x+\c/2,-2*\c) {$\phi_0$};

    \def \s{\x+1.2*\c}

     \draw [fill=gray!15] (\s, 3*\c) rectangle (\s+\c, 2*\c+0.1);
    \node at (\s+\c/2,5*\c/2) {$\sigma_{k'}$};
    
    \draw [fill=gray!15] (\s, 2*\c) rectangle (\s+\c, 0.1);
    \node at (\s+\c/2,\c+0.1) {$\ket{\phi_a}$};
    
    \draw [fill=cyan!15] (\s, -2*\c) rectangle (\s+\c, -0.1);
    \node at (\s+\c/2,-\c) {$\ket{\phi_b}$};

    \draw [fill=cyan!15] (\s, -3*\c) rectangle (\s+\c, -2*\c-0.1);
    \node at (\s+\c/2,-2*\c-\c/2) {$\sigma_{k'}$};


    \node at (\s+3,0) {$\quad \quad = \quad \sum_{k,k'}  u_{k}u_{k'}^* 
    \,$};

    \def \x{12}

    \def \a{\x-2.4*\c}
    \def \b{\x+2.5*\c}
    \draw (\a,\c/2) -- (\b,\c/2);
    \draw (\a,-\c/2) -- (\b,-\c/2);

    \def \s{\x-2.4*\c}

    \draw [fill=gray!15] (\s, \c) rectangle (\s+\c, 0.1);
    \node at (\s+\c/2,\c/2) {$\sigma_k$};
    
    \draw [fill=cyan!15] (\s, -\c) rectangle (\s+\c, -0.1);
    \node at (\s+\c/2,-\c/2) {$\sigma_k$};
    
    \def \s{\x-1.2*\c}

    \draw [fill=gray!15] (\s, \c) rectangle (\s+\c, 0.1);
    \node at (\s+\c/2,\c/2) {$\sigma_a$};
    
    \draw [fill=cyan!15] (\s, -\c) rectangle (\s+\c, -0.1);
    \node at (\s+\c/2,-\c/2) {$\sigma_b$};

    \draw [fill=white!15] (\x , \c) rectangle (\x+\c,-\c);
    \node at (\x+\c/2,0) {$\Tilde{\rho}$};

    \def \s{\x+1.2*\c}

    \draw [fill=gray!15] (\s, \c) rectangle (\s+\c, 0.1);
    \node at (\s+\c/2,\c/2) {$\sigma_a$};
    
    \draw [fill=cyan!15] (\s, -\c) rectangle (\s+\c, -0.1);
    \node at (\s+\c/2,-\c/2) {$\sigma_b$};

    \def \s{\x+2.4*\c}

    \draw [fill=gray!15] (\s, \c) rectangle (\s+\c, 0.1);
    \node at (\s+\c/2,\c/2) {$\sigma_{k'}$};
    
    \draw [fill=cyan!15] (\s, -\c) rectangle (\s+\c, -0.1);
    \node at (\s+\c/2,-\c/2) {$\sigma_{k'}$};

    \end{tikzpicture}
\end{equation}
where the Pauli operations $\sigma_a$ and $\sigma_b$ appear as a consequence of projecting into the Bell basis.
In the normal gate-teleportation scenario, one can only recover from these Pauli gates, if the teleported gate was a Clifford gate. Even then, one needs to classical communicate the local measurement of $a$ or $b$ to the other side to perform the recovery operation $U(\sigma_a \ox \sigma_b) U^\dagger$. This operation is only local for Clifford gates. The quasiprobability simulation setting is however different. Here, we are not actually performing the non-local unitary, but rather we just sample from local channels. This gives us additional information to work with.

We recover the original channel by first applying a local operation $\sigma_x$ based on the measurement result $x$ of each party. This modifies the post-measurement state to
\begin{equation}
        \begin{tikzpicture}[thick,scale=0.8]
    \centering
    \def \x{5.5}
    \def \s{0}
    \def \c{1}


    \node at (0,0) {$\quad \quad  \sum_{k,k'}  u_{k}u_{k'}^* 
    \,$};

    \def \a{\x-3.6*\c}
    \def \b{\x+3.6*\c}
    \draw (\a,\c/2) -- (\b,\c/2);
    \draw (\a,-\c/2) -- (\b,-\c/2);

    \def \s{\x-3.6*\c}

    \draw [fill=gray!15] (\s, \c) rectangle (\s+\c, 0.1);
    \node at (\s+\c/2,\c/2) {$\sigma_a$};
    
    \draw [fill=cyan!15] (\s, -\c) rectangle (\s+\c, -0.1);
    \node at (\s+\c/2,-\c/2) {$\sigma_b$};

    \def \s{\x-2.4*\c}

    \draw [fill=gray!15] (\s, \c) rectangle (\s+\c, 0.1);
    \node at (\s+\c/2,\c/2) {$\sigma_k$};
    
    \draw [fill=cyan!15] (\s, -\c) rectangle (\s+\c, -0.1);
    \node at (\s+\c/2,-\c/2) {$\sigma_k$};
    
    \def \s{\x-1.2*\c}

    \draw [fill=gray!15] (\s, \c) rectangle (\s+\c, 0.1);
    \node at (\s+\c/2,\c/2) {$\sigma_a$};
    
    \draw [fill=cyan!15] (\s, -\c) rectangle (\s+\c, -0.1);
    \node at (\s+\c/2,-\c/2) {$\sigma_b$};

    \draw [fill=white!15] (\x , \c) rectangle (\x+\c,-\c);
    \node at (\x+\c/2,0) {$\Tilde{\rho}$};

    \def \s{\x+1.2*\c}

    \draw [fill=gray!15] (\s, \c) rectangle (\s+\c, 0.1);
    \node at (\s+\c/2,\c/2) {$\sigma_a$};
    
    \draw [fill=cyan!15] (\s, -\c) rectangle (\s+\c, -0.1);
    \node at (\s+\c/2,-\c/2) {$\sigma_b$};

    \def \s{\x+2.4*\c}

    \draw [fill=gray!15] (\s, \c) rectangle (\s+\c, 0.1);
    \node at (\s+\c/2,\c/2) {$\sigma_{k'}$};
    
    \draw [fill=cyan!15] (\s, -\c) rectangle (\s+\c, -0.1);
    \node at (\s+\c/2,-\c/2) {$\sigma_{k'}$};

    \def \s{\x+3.6*\c}

    \draw [fill=gray!15] (\s, \c) rectangle (\s+\c, 0.1);
    \node at (\s+\c/2,\c/2) {$\sigma_a$};
    
    \draw [fill=cyan!15] (\s, -\c) rectangle (\s+\c, -0.1);
    \node at (\s+\c/2,-\c/2) {$\sigma_b$};

     \node at (\x+1.8*\c,-3*\c) {$ =  \sum_{k,k'}  u_{k}u_{k'}^* f(a,k)f(a,k')
    f(b,k)f(b,k')\, $};

    \def \x{13}

     \def \s{\x-1.2*\c}
     
    \def \a{\x-1.2*\c}
    \def \b{\x+1.2*\c}
    \draw (\a,\c/2-3*\c) -- (\b,\c/2-3*\c);
    \draw (\a,-\c/2-3*\c) -- (\b,-\c/2-3*\c);

    \draw [fill=gray!15] (\s, \c-3*\c) rectangle (\s+\c, 0.1-3*\c);
    \node at (\s+\c/2,\c/2-3*\c) {$\sigma_k$};
    
    \draw [fill=cyan!15] (\s, -\c-3*\c) rectangle (\s+\c, -0.1-3*\c);
    \node at (\s+\c/2,-\c/2-3*\c) {$\sigma_k$};

    \draw [fill=white!15] (\x , \c-3*\c) rectangle (\x+\c,-\c-3*\c);
    \node at (\x+\c/2,0-3*\c) {$\Tilde{\rho}$};

    \def \s{\x+1.2*\c}

    \draw [fill=gray!15] (\s, \c-3*\c) rectangle (\s+\c, 0.1-3*\c);
    \node at (\s+\c/2,\c/2-3*\c) {$\sigma_{k'}$};
    
    \draw [fill=cyan!15] (\s, -\c-3*\c) rectangle (\s+\c, -0.1-3*\c);
    \node at (\s+\c/2,-\c/2-3*\c) {$\sigma_{k'}$};

    \end{tikzpicture}
\end{equation}
with functions $f(i,j)$ that are either $+1$ or $-1$. 
The sign follows from the the anti-commutation relations between the Pauli matrices $\sigma_a, \sigma_k,\sigma_b$ and $\sigma_{k'}$.
Since, we are doing a quasiprobability simulation, we are not performing the whole sum over $k,k'$ and only apply a specific channel that depends on $k,k'$. 
We either have the case $k=k'$, but then the recovery is trivial or the case $k\neq k'$. 
In this case, we have knowledge of $k,k'$ and $a,b$ and can in the post processing weight the result with a factor of $f(a,k)f(a,k')f(b,k)f(b,k')$, which is either $+1$ or $-1$. Such a weighting does not affect the $\gamma$-factor, but gives the right estimator for the expectation value. Since all used channels only require local operations, we present a concrete construction with $\overline{\gamma}_{\LOCC}(U_{A_1,B_1}, V_{A_2,B_2}) \leq \overline{\gamma}_{\LO}(U_{A_1,B_1}, V_{A_2,B_2}) \leq 1 + 2 \sum_{\mathbf{k} \neq\mathbf{k}'} |w_{\mathbf{k}}| |w_{\mathbf{k}'}|$.

This construction works independently of the black box channel and has the same overhead as the parallel cut (the black box being the identity). It this therefore optimal according to our definition~\cref{def_blackbox_factor}.
We show the procedure as a quantum circuit in~\cref{fig_blackbox_protocol}.
\end{proof}

\begin{figure}
    \centering
        \begin{tikzpicture}[thick,scale=0.8]

    \def \x{0}
    
    \def \s{0.9}

     \draw [fill=gray!15,draw=none] (\x+0.05,0.06) rectangle (\x+4.0,1.5);   
     \draw [fill=cyan!15,draw=none] (\x+0.05,-0.06) rectangle (\x+4.0,-1.5);

     \draw [fill=gray!15,draw=none] (\s+0.1,1.4+0.4) rectangle (\s+0.1+0.25,1.4+0.25+0.4);
     \draw [fill=cyan!15,draw=none] (\s+1.6,1.4+0.4) rectangle (\s+1.6+0.25,1.4+0.25+0.4);     
     \node[gray] at (\s + 0.6,1.5+0.4) {\footnotesize{$A$}};
     \node[cyan] at (\s + 2.1,1.5+0.4) {\footnotesize{$B$}};

\node at (\x+2,0) {$----------$};

\node at (\x + 4.75 ,0) {\Large{'$=$'}};

     \draw (\x,0.5) -- (\x+4,0.5);
     \draw (\x,-0.5) -- (\x+4,-0.5); 
     
     \draw (\x,0.9) -- (\x+4,0.9);
     \draw (\x,-0.9) -- (\x+4,-0.9);

     \draw (\x,1.3) -- (\x+4,1.3);
     \draw (\x,-1.3) -- (\x+4,-1.3);

     \def \xx{0.8+\x}
     \draw [fill=red!40] (\xx,0.9) circle (2.5mm);
     \node at (\xx,0.9) {\footnotesize{$U_1$}};
     \draw (\xx,0.9-0.25) -- (\xx,-0.5+0.25);
     \draw [fill=red!40](\xx,-0.5) circle (2.5mm);
     \node at (\xx,-0.5) {\footnotesize{$U_1$}};

     \def \xx{2+\x}

     \draw [fill=blue!50] (\xx-0.3,1.4) -- (\xx+0.3,1.4) -- (\xx+0.5,0.4) -- (\xx+0.5,-0.4) -- (\xx+0.3,-1.4) -- (\xx-0.3,-1.4) -- (\xx-0.5,-0.4) -- (\xx-0.5,0.4) --cycle; 
     \node at (\xx,0) {\large{$\cE$}};

     \def \xx{3.1+\x}
     \draw [fill=Yellow!40] (\xx,1.3) circle (2.5mm);
     \node at (\xx,1.3) {\footnotesize{$U_2$}};
     \draw (\xx,1.3-0.25) -- (\xx,-1.3+0.25);
     \draw [fill=Yellow!40](\xx,-1.3) circle (2.5mm);
     \node at (\xx,-1.3) {\footnotesize{$U_2$}};

    \def\x{5.5}
    
    \def \s{1.5}

     \draw [fill=gray!15,draw=none] (\x+0.05,0.06) rectangle (\x+6.0,2.2);   
     \draw [fill=cyan!15,draw=none] (\x+0.05,-0.06) rectangle (\x+6.0,-2.2);

     \draw [fill=gray!15,draw=none] (\s+\x+0.1,2.1+0.4) rectangle (\s+\x+0.1+0.25,2.1+0.25+0.4);
     \draw [fill=cyan!15,draw=none] (\s+\x+1.6,2.1+0.4) rectangle (\s+\x+1.6+0.25,2.1+0.25+0.4);     
     \node[gray] at (\s+\x + 0.6,2.2+0.4) {\footnotesize{$A$}};
     \node[cyan] at (\s+\x + 2.1,2.2+0.4) {\footnotesize{$B$}};

\node at (\x+3,0) {$--------------$};



\def \xx{\x+0.3}

     \draw[decorate,decoration={snake,amplitude=.4mm,segment length=1mm,post length=1mm}] (\xx,1.7) -- (\xx,2.1);
     
     \draw[decorate,decoration={snake,amplitude=.4mm,segment length=1mm,post length=1mm}] (\xx,-1.7) -- (\xx,-2.1);
     
     \draw[fill=black] (\xx,1.7) circle (0.5mm);
     \draw[fill=black] (\xx,2.1) circle (0.5mm);

     \draw[fill=black] (\xx,-1.7) circle (0.5mm);
     \draw[fill=black] (\xx,-2.1) circle (0.5mm);

     \draw (\x,0.5) -- (\x+6,0.5);
     \draw (\x,-0.5) -- (\x+6,-0.5); 
     
     \draw (\x,0.9) -- (\x+6,0.9);
     \draw (\x,-0.9) -- (\x+6,-0.9);

     \draw (\x,1.3) -- (\x+4.3,1.3);
     \draw (\x,-1.3) -- (\x+4.3,-1.3);

     \draw (\x,1.7) -- (\x+4.3,1.7);
     \draw (\x,-1.7) -- (\x+4.3,-1.7);

     \draw (\x,2.1) -- (\x+6,2.1);
     \draw (\x,-2.1) -- (\x+6,-2.1);

     \def \g{5.1}

     \draw[thin] (\g+\x+0.16,2.1-0.2) -- (\g+\x+0.16,1.5);
     \draw[thin] (\g+\x+0.24,2.1-0.2) -- (\g+\x+0.24,1.42);

     \draw[thin] (\g+\x+0.24,1.42) -- (\g+\x-0.6,1.42);
     \draw[thin] (\g+\x+0.16,1.5) -- (\g+\x-0.6,1.5);

     \draw[thin] (\g+\x+0.16,-2.1+0.2) -- (\g+\x+0.16,-1.5);
     \draw[thin] (\g+\x+0.24,-2.1+0.2) -- (\g+\x+0.24,-1.42);

     \draw[thin] (\g+\x+0.16,-1.5) -- (\g+\x-0.6,-1.5);
     \draw[thin] (\g+\x+0.24,-1.42) -- (\g+\x-0.6,-1.42);
     
     \draw [fill=white!] (\g+\x,2.1-0.25) rectangle (\g+\x+0.5,2.1+0.25);
     \draw [fill=white!] (\g+\x,-2.1-0.25) rectangle (\g+\x+0.5,-2.1+0.25);
     \node at (\g+\x+0.25,2.1) {\scriptsize{$\sigma_a$}};
     \node at (\g+\x+0.25,-2.1) {\scriptsize{$\sigma_b$}};

     \def \g{4.3}
     \draw [fill=white!] (\g+\x,1.3-0.2) rectangle (\g+\x+0.4,1.7+0.2);
     \draw [fill=white!] (\g+\x,-1.7-0.2) rectangle (\g+\x+0.4,-1.3+0.2);

     \draw[thin,->] (\g+\x+0.2,1.4) -- (\g+\x+0.3,1.65);
     \draw[thin] (\g+\x+0.37,1.4) arc (0:180:0.17);
     \draw[thin,->] (\g+\x+0.2,-1.55) -- (\g+\x+0.3,-1.3);
     \draw[thin] (\g+\x+0.37,-1.55) arc (0:180:0.17);

     \def \xx{3.4+\x}

     \draw [fill=blue!50] (\xx-0.3,1.4) -- (\xx+0.3,1.4) -- (\xx+0.5,0.4) -- (\xx+0.5,-0.4) -- (\xx+0.3,-1.4) -- (\xx-0.3,-1.4) -- (\xx-0.5,-0.4) -- (\xx-0.5,0.4) --cycle; 
     \node at (\xx,0) {\large{$\cE$}};

     \def \xx{1+\x}
     \draw [fill=red!40] (\xx,0.9) circle (2.5mm);
     \node at (\xx,0.9) {\footnotesize{$U_1$}};
     \draw (\xx,0.9-0.25) -- (\xx,-0.5+0.25);
     \draw [fill=red!40](\xx,-0.5) circle (2.5mm);
     \node at (\xx,-0.5) {\footnotesize{$U_1$}};

     \def \xx{2+\x}
     \draw [fill=Yellow!40] (\xx,2.1) circle (2.5mm);
     \node at (\xx,2.1) {\footnotesize{$U_2$}};
     \draw (\xx,2.1-0.25) -- (\xx,-2.1+0.25);
     \draw [fill=Yellow!40](\xx,-2.1) circle (2.5mm);
     \node at (\xx,-2.1) {\footnotesize{$U_2$}};

    \end{tikzpicture}
    \caption{Protocol of the black box cut. Both parties $A$ and $B$ apply a Pauli operation $\sigma_x$ based on their Bell-measurement outcome.}
    \label{fig_blackbox_protocol}
\end{figure}

In the scenario of normal gate teleportation, it is critical that we have to send classical information to the other party. Otherwise we would be able to signal faster than light. In the QPD setting, this is not an issue, since each individual channel cannot signal. 

\subsection{Cutting in the black box setting for N two-qubit gates}\label{sec_black_box_n}
For completeness, we will present how to generalize the previous theorem to $n$ gates. For this, we first define the black box overhead for multiple channels.
\begin{definition}[Black box $\gamma$-factor] \label{def_blackbox_factor_n}
    The minimal black box overhead of channels $\{\cE_{AB}^{(j)}\}_j$ that are separated by arbitrary black box channels $\{X_{AB}^{(j)}\}_j$ is given by
  \begin{align}
    \overline{\gamma}_S(\{\cE_{AB}^j\}_j ) 
      &\coloneqq \min \Big\{ \sum_{i=1}^m \lvert a_i \rvert :  \prod_j   X_{AB}^{(j)} \cE_{AB}^{(j)} =  \tr_{E_AE_B} \sum\limits_{i=1}^m a_i (\prod_j  X_{AB}^{(j)} \cF^{(j)}_i) , \nonumber \\ 
      &\hspace{25mm} \forall X_{AB}^{(j)}\in\CPTP, \, m\geq 1, \, \cF_i^{(j)} \in S(AE_A,BE_B) \textnormal{ and } a_i \in \R \Big\} \, ,
  \end{align} 
  with $S(AE_A,BE_B)=\{ \LO(AE_A,BE_B),\LOCC(AE_A,BE_B) \}$.
\end{definition}

Now we extend the previous theorem.
\begin{theorem}\label{theorem:blackbox_n}
    $N$ two-qubit unitaries $\{U_{A_i,B_i}^{(i)}\}_{i=1}^N$ have a decomposition independent of interleaving black box channel $X_{AB}^{(i)}$, with optimal black box overhead
    \begin{equation}
        \overline{\gamma}_\LOCC(\{U_{A_i,B_i}^{(i)}\}_i^N) = \overline{\gamma}_\LO(\{U_{A_i,B_i}^{(i)}\}_i^N) = 1 + 2 \sum_{\mathbf{k} \neq\mathbf{k}'} |w_{\mathbf{k}}| |w_{\mathbf{k}'}| \, ,
    \end{equation}
    where $\mathbf{k} \in \{0,1,2,3\}^N$ and $w_{\mathbf{k}} = \prod_{i=1}^N  u^{(i)}_{k_i}$ is the product of the KAK-coefficients of $U_{A_i,B_i}^{(i)}$.
\end{theorem}

Note that~\cref{theorem:blackbox} and~\cref{theorem:blackbox_n} have the same simulation overhead as~\cref{cor_parallel_cuts}. Arbitrary cuts are therefore, apart from a requirement for ancillary qubits, not more expensive than parallel cuts.

\begin{proof}
    We proof the assertion by explicitly constructing operations $\cF_i^{(j)} \in \LO(AE_A,BE_B)$. Optimality then follows from the black boxes being identity maps and \cref{cor_parallel_cuts}.
    The procedure is analogous to the two gate case. First we implement the channel used for the joint parallel cut of all $n$ two-qubit gates (cf. \cref{sec_parallel_2qubit}) via $\cF_i^{(j)}$. 
    This comes at a sampling overhead of
    \begin{equation}
         1 + 2 \sum_{\mathbf{k} \neq\mathbf{k}'} |w_{\mathbf{k}}| |w_{\mathbf{k}'}| \, ,
    \end{equation}
    where $\mathbf{k} \in \{0,1,2,3\}^N$ and $w_{\mathbf{k}} = \prod_{i=1}^N  u^{(i)}_{k_i}$ is the product of the KAK-coefficients of $U_{A_i,B_i}^{(i)}$.
    Importantly, instead of implementing the two-qubit unitaries \smash{$U_{A_i,B_i}^{(i)}$} at their position $A_i$, $B_i$ we apply them to Bell states in the environment $E_A$ and $E_B$ as seen in the previous part of this section. These ``virtual'' Choi states can then be used to apply the two-qubit gate at a later time. This is done via the operations $\{ \cF_i^{(j)}\}_{i=1}^{j-1}$. As seen before, the strategy consists of a bell measurement between $E_A$ and $A_i$ and $E_B$ and $B_i$ together with a corresponding local correction. The overall sign of the sampling run can be calculated based on the measurement results. Since all this can be done locally and in the postprocessing, these operations do not change the overhead. 
\end{proof}


\section{Open questions} \label{sec_open_questions}
While our work now provides an extended understanding of circuit cutting for two-qubit unitaries, it still remains an open question to characterize the optimal overhead for more general unitaries.
For some specific unitaries like the Toffoli gate, our results can be naturally extended, as seen in~\cref{rmk_toffoli} below.
\begin{remark}[Toffoli gate] \label{rmk_toffoli}
We can apply~\cref{cor_two_qubit_gamma} to obtain optimal cuts for the Toffoli gate via the two identities given in~\cref{fig_toffoli}.
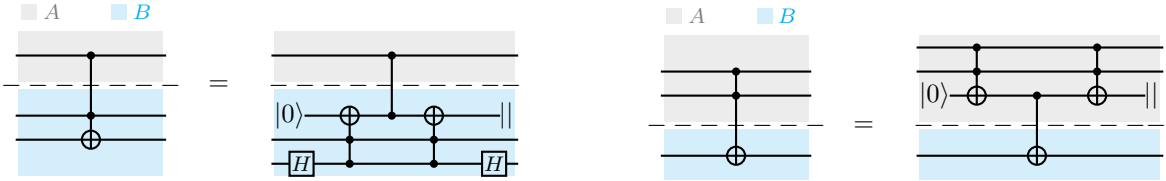
\begin{figure}[!htb]
    \centering
    \begin{subfigure}[b]{0.45\textwidth}
            \begin{tikzpicture}[thick,scale=0.8]
    \def\x{4.25}

     \draw [fill=gray!15,draw=none] (0.05,0.06) rectangle (2.45,0.9);   
     \draw [fill=cyan!15,draw=none] (0.05,-0.06) rectangle (2.45,-1.5);  

     \draw [fill=gray!15,draw=none] (\x+0.05,0.06) rectangle (\x+4.05,0.9);   
     \draw [fill=cyan!15,draw=none] (\x+0.05,-0.06) rectangle (\x+4.05,-1.5);

     \draw [fill=gray!15,draw=none] (0.1,1+0.1) rectangle (0.1+0.25,1+0.25+0.1);
     \draw [fill=cyan!15,draw=none] (1.6,1+0.1) rectangle (1.6+0.25,1+0.25+0.1);     
     \node[gray] at (0.6,1.125+0.1) {\footnotesize{$A$}};
     \node[cyan] at (2.1,1.125+0.1) {\footnotesize{$B$}};
    
     \draw (0,0.5) -- (2.5,0.5);
     \draw (0,0.-0.9) -- (2.5,-0.9);
     \draw (0,-0.5) -- (2.5,-0.5);   

\node at (1.25,0) {$-------$};

     \def\xx{1.25}
     \draw[fill=black] (\xx,0.5) circle (0.5mm);
     \draw[fill=black] (\xx,-0.5) circle (0.5mm);
     \draw (\xx,-0.9) circle (1.5mm);
     \draw (\xx,0.5) -- (\xx,-0.9-0.15);

     \node at ((1.25+\x/2,0) {$=$};
    \def\x{4.25}
    \draw (\x,-0.9) -- (\x+4.1,-0.9);
    \draw (\x,0.5) -- (\x+4.1,0.5);
    \draw (\x+0.55,-0.5) -- (\x+3.8,-0.5);
    \draw (\x,-1.3) -- (\x+4.1,-1.3);

    \node at (\x+0.3,-0.5) {$\ket{0}$};
    \node at (\x+3.9,-0.5) {$||$};

    \draw[fill=cyan!15] (\x+0.3,-1.3-0.2) rectangle (\x+0.3+0.4,-1.3+0.2);
    \node at (\x+0.3+0.2,-1.3) {\footnotesize{$H$}};

    \draw[fill=cyan!15] (\x+3.5,-1.3-0.2) rectangle (\x+3.5+0.4,-1.3+0.2);
    \node at (\x+3.5+0.2,-1.3) {\footnotesize{$H$}};
\node at (\x+2.05,0) {$----------$};

    \def\xx{\x+1.3}
     \draw[fill=black] (\xx,-1.3) circle (0.5mm);
     \draw[fill=black] (\xx,-0.9) circle (0.5mm);
     \draw (\xx,-0.5) circle (1.5mm);
     \draw (\xx,-1.3) -- (\xx,-0.5+0.15);

    \def\xx{\x+2}
     \draw[fill=black] (\xx,0.5) circle (0.5mm);
     \draw[fill=black] (\xx,-0.5) circle (0.5mm);
     \draw (\xx,0.5) -- (\xx,-0.5);

    \def\xx{\x+2.7}
     \draw[fill=black] (\xx,-1.3) circle (0.5mm);
     \draw[fill=black] (\xx,-0.9) circle (0.5mm);
     \draw (\xx,-0.5) circle (1.5mm);
     \draw (\xx,-1.3) -- (\xx,-0.5+0.15);

    \end{tikzpicture}
        \caption{Cutting a Toffoli between qubits 1 and 2.}
        \label{fig:toff12}
    \end{subfigure}
        \hfill
    \begin{subfigure}[b]{0.45\textwidth}
            \begin{tikzpicture}[thick,scale=0.8]
    \def\x{4.25}

     \draw [fill=gray!15,draw=none] (0.05,0.06) rectangle (2.45,1.5);   
     \draw [fill=cyan!15,draw=none] (0.05,-0.06) rectangle (2.45,-0.9);  

     \draw [fill=gray!15,draw=none] (\x+0.05,0.06) rectangle (\x+4.05,1.5);   
     \draw [fill=cyan!15,draw=none] (\x+0.05,-0.06) rectangle (\x+4.05,-0.9);

     \draw [fill=gray!15,draw=none] (0.1,1+0.7) rectangle (0.1+0.25,1+0.25+0.7);
     \draw [fill=cyan!15,draw=none] (1.6,1+0.7) rectangle (1.6+0.25,1+0.25+0.7);     
     \node[gray] at (0.6,1.125+0.7) {\footnotesize{$A$}};
     \node[cyan] at (2.1,1.125+0.7) {\footnotesize{$B$}};
    
     \draw (0,0.9) -- (2.5,0.9);
     \draw (0,0.5) -- (2.5,0.5);
     \draw (0,-0.5) -- (2.5,-0.5);   

\node at (1.25,0) {$-------$};

     \def\xx{1.25}
     \draw[fill=black] (\xx,0.9) circle (0.5mm);
     \draw[fill=black] (\xx,0.5) circle (0.5mm);
     \draw (\xx,-0.5) circle (1.5mm);
     \draw (\xx,0.9) -- (\xx,-0.5-0.15);
     
     \node at ((1.25+\x/2,0) {$=$};
    \def\x{4.25}
    \draw (\x,1.3) -- (\x+4.1,1.3);
    \draw (\x,0.9) -- (\x+4.1,0.9);
    \draw (\x+0.55,0.5) -- (\x+3.8,0.5);
    \draw (\x,-0.5) -- (\x+4.1,-0.5);

    \node at (\x+0.3,0.5) {$\ket{0}$};
    \node at (\x+3.9,0.5) {$||$};

\node at (\x+2.05,0) {$----------$};

    \def\xx{\x+1}
     \draw[fill=black] (\xx,1.3) circle (0.5mm);
     \draw[fill=black] (\xx,0.9) circle (0.5mm);
     \draw (\xx,0.5) circle (1.5mm);
     \draw (\xx,1.3) -- (\xx,0.5-0.15);

    \def\xx{\x+2}
     \draw[fill=black] (\xx,0.5) circle (0.5mm);
     \draw (\xx,-0.5) circle (1.5mm);
     \draw (\xx,0.5) -- (\xx,-0.5-0.15);

     \def\xx{\x+3}
     \draw[fill=black] (\xx,1.3) circle (0.5mm);
     \draw[fill=black] (\xx,0.9) circle (0.5mm);
     \draw (\xx,0.5) circle (1.5mm);
     \draw (\xx,1.3) -- (\xx,0.5-0.15);

    \end{tikzpicture}
        \caption{Cutting a Toffoli between qubits 2 and 3.}
        \label{fig:toff23}
    \end{subfigure}        
\caption{These two circuits allow us to prove that $\gamma(\mathrm{Toffoli})=3$ as described in the text below.}
\label{fig_toffoli}
\end{figure}
We immediately see that $\gamma(\mathrm{Toffoli}) \leq 3$, since $\gamma(\mathrm{CZ})=\gamma(\mathrm{CNOT})=3$ as ensured by~\cref{cor_two_qubit_gamma}.
The other direction, i.e.~$\gamma(\mathrm{Toffoli}) \geq 3$ follows since $\gamma(\mathrm{Toffoli}) \geq \gamma(\proj{\psi^{(\phi)}_{\mathrm{Toffoli}}})$, where 
\begin{align}
\ket{\psi^{(\phi)}_{\mathrm{Toffoli}}}=( \mathrm{Toffoli}_{AAB} \ox \mathds{1}_{B} )(\ket{+}_A \ket{1}_{A} \otimes \ket{0}_B\ket{0}_{B}) \, .
\end{align}
A small calculation then shows
\begin{align}
\gamma(\mathrm{Toffoli}) \geq\gamma(\proj{\psi^{(\phi)}_{\mathrm{Toffoli}}}) = 3 \, .
\end{align}
Surprisingly, $\gamma(\ket{\mathrm{Choi_{Toffoli}}})=2.76$, indicating, that the Choi state is in general not a tight bound from below.
Similar identities can be found for larger multi-controlled gates based on the symmetry of the CCZ gate.
\end{remark}
If similar arguments can be made for arbitrary unitaries is unclear.
An interesting step in this direction could be to isolate interaction parts.
For instance, for the Toffoli gate, it is possible to isolate the non-local interaction, even though it is not of KAK-like form and~\cref{Master} does not apply directly. It would be interesting to see for which other unitaries this is the case.

Another important topic for further research is the characterization of the optimal sampling overhead for non-unitary channels.
Here it seems likely that there is a strict separation between $\gamma_{\LO}$ and $\gamma_{\LOCC}$, at least on channels that themselves involve classical communication between the bipartitions.

\paragraph{Relation to other works}
During the preparation of this work, we became aware of two other independent efforts to optimally cut two-qubit unitaries which were published at almost the same time~\cite{ufrecht2023optimal,anguspaper}.
The authors in~\cite{ufrecht2023optimal} obtain the same optimality result for the single-cut, parallel cut, and black box setting, however they only consider controlled rotation gates and not general two-qubit gates.
In~\cite{anguspaper}, the authors derive the same technical result as~\cref{Master} and thus obtain the same optimality result for single cuts and parallel cuts of arbitrary two-qubit gates. However, they do not investigate the optimal overhead for the black box setting.

 \paragraph{Acknowledgements}
 We thank Stefan Woerner for help with~\cref{rmk_toffoli} and Angus Lowe for useful discussions.
This work was supported by the ETH Quantum Center, the National Centres for Competence in Research in Quantum Science and Technology (QSIT) and The Mathematics of Physics (SwissMAP), and the Swiss National Science Foundation Sinergia grant CRSII5 186364.
 

\appendix
\section{Circuits for implementing cutting protocol}\label{sec:circ}
To obtain the optimal sampling overhead presented in \cref{Master}, one has to apply 
the channel $({\cC_{k,k'}^{\phi}})_\pm(\rho) \otimes {\cD_{k,k'}^{\phi}})_\pm(\rho)$. 
This channel is implemented by the following circuit:
\begin{equation*}
			\Qcircuit @C=1em @R=0.5em{
				\lstick{\ket{0}} & \gate{H} & \gate{\begin{matrix} 1 & 0 \\ 0 & e^{-i\phi} \end{matrix}} & \ctrlo{1} & \ctrl{1} &  \gate{H}  & \meter\\
				 & \qw & \qw & \multigate{2}{L_k}  & \multigate{2}{L_{k'}}  \qw & \qw & \qw \\
				 & A & \raisebox{0.4\height}{\vdots} &   &  & \raisebox{0.4\height}{\vdots} & \\ 
               & \qw  & \qw &  \ghost{L_k} &  \ghost{L_{k'}}  & \qw  & \qw \\
				 & \qw  & \qw &  \ghost{R_k} &  \ghost{R_{k'}}  & \qw  &  \qw \\
              & B & \raisebox{0.4\height}{\vdots} &   &  & \raisebox{0.4\height}{\vdots} & \\ 
              & \qw & \qw & \multigate{-2}{R_k} & \multigate{-2}{R_{k'}} &   \qw & \qw \\
              \lstick{\ket{0}} & \gate{H}  &  \gate{\begin{matrix} 1 & 0 \\ 0 & e^{-i\phi} \end{matrix}} & \ctrlo{-1} & \ctrl{-1}  &    \gate{H}  &    \meter 
			}
\end{equation*}
The first and the last qubit are ancillary qubits that are required for the negativity trick. Their outcome $a_1$ and $a_2$ ($a_i \in \{0,1\}$) determines the sign $(-1)^{a_1+a_2}$ with which the shot is weighted.
With this the total workflow can be described by the following two steps:
\begin{enumerate}[(1)]
    \item Determine the KAK-like decomposition of the unitary $U_{AB}$ that is to be cut. \begin{equation*}
        U_{AB} = \left(V_A^{(1)}\otimes V_B^{(2)}\right)  \sum_{k=1}^N u_{k} (L_k)_A \otimes (R_k)_B  \left(V_A^{(3)}\otimes V_B^{(4)}\right)
    \end{equation*}
    In the case of parallel two-qubit unitaries, this can be easily achieved by multiplying their single KAK-decompositions.
    \item Randomly run the following circuits and weight the measurement outcome with the corresponding factor in post processing:
\end{enumerate}
\begin{itemize}
    \item \begin{equation*}
			\Qcircuit @C=1em @R=0.5em{
				 & \qw & \qw & \multigate{2}{L_k} &  \qw & \qw \\
				 & A & \raisebox{0.4\height}{\vdots} &   &  \raisebox{0.4\height}{\vdots} & \\ 
               & \qw  & \qw &  \ghost{L_k}  & \qw  & \qw \\
                &   &  &    &  &  \\
				 & \qw  & \qw &  \ghost{R_k}  & \qw  &  \qw \\
              & B & \raisebox{0.4\height}{\vdots} &   &  \raisebox{0.4\height}{\vdots} & \\ 
              & \qw & \qw & \multigate{-2}{R_k} &  \qw & \qw &,
			} 
\end{equation*}
The measurement outcome of each of these $N$ circuits is weighted by the KAK-coefficient $|u_k|^2$.
\item There are $N(N-1)/2$ circuits of the form
\begin{equation*}
			\Qcircuit @C=1em @R=0.5em{
				\lstick{\ket{0}} & \gate{H} & \gate{\begin{matrix} 1 & 0 \\ 0 & e^{-i\frac{\phi_{k,{k'}}}{2}} \end{matrix}}
 & \ctrlo{1} & \ctrl{1} &  \gate{H}  & \meter\\
				 & \qw & \qw & \multigate{2}{L_k}  & \multigate{2}{L_{k'}}  \qw & \qw & \qw \\
				 & A & \raisebox{0.4\height}{\vdots} &   &  & \raisebox{0.4\height}{\vdots} & \\ 
               & \qw  & \qw &  \ghost{L_k} &  \ghost{L_{k'}}  & \qw  & \qw \\
				 &   &  &    &  &  \\ 
            & \qw  & \qw &  \ghost{R_k} &  \ghost{R_{k'}}  & \qw  &  \qw \\
              & B & \raisebox{0.4\height}{\vdots} &   &  & \raisebox{0.4\height}{\vdots} & \\ 
              & \qw & \qw & \multigate{-2}{R_k} & \multigate{-2}{R_{k'}} &   \qw & \qw \\
              \lstick{\ket{0}} & \gate{H}  &  \gate{\begin{matrix} 1 & 0 \\ 0 & e^{-i\frac{\phi_{k,{k'}}}{2}} \end{matrix}}
 & \ctrlo{-1} & \ctrl{-1}  &    \gate{H}  &    \meter 
			}
\end{equation*}
with $k<k'$,
that have to be weighted by $2 |u_k||u_{k'}| (-1)^{a_1+a_2}$.
\item There $N(N-1)/2$ circuits of the form
\begin{equation*}
			\Qcircuit @C=1em @R=0.5em{
				\lstick{\ket{0}} & \gate{H} & \gate{\begin{matrix} 1 & 0 \\ 0 & e^{-i\frac{\phi_{k,{k'}+\pi}}{2}} \end{matrix}} & \ctrlo{1} & \ctrl{1} &  \gate{H}  & \meter\\
				 & \qw & \qw & \multigate{2}{L_k}  & \multigate{2}{L_{k'}}  \qw & \qw & \qw \\
				 & A & \raisebox{0.4\height}{\vdots} &   &  & \raisebox{0.4\height}{\vdots} & \\ 
               & \qw  & \qw &  \ghost{L_k} &  \ghost{L_{k'}}  & \qw  & \qw \\
		      &   &  &    &  &  \\
            & \qw  & \qw &  \ghost{R_k} &  \ghost{R_{k'}}  & \qw  &  \qw \\
              & B & \raisebox{0.4\height}{\vdots} &   &  & \raisebox{0.4\height}{\vdots} & \\ 
              & \qw & \qw & \multigate{-2}{R_k} & \multigate{-2}{R_{k'}} &   \qw & \qw \\
              \lstick{\ket{0}} & \gate{H}  &  \gate{\begin{matrix} 1 & 0 \\ 0 & e^{-i\frac{\phi_{k,{k'}+\pi}}{2}} \end{matrix}} & \ctrlo{-1} & \ctrl{-1}  &    \gate{H}  &    \meter 
			}
\end{equation*}
with $k<k'$,
that have to be weighted by $-2 |u_k||u_{k'}| (-1)^{a_1+a_2}$ . 
\end{itemize}
The  phases $\phi_{k,k'}$ are determined via the phases of the coefficients $u_k = |u_k|\ee^{\ci\phi_k}$ and $\phi_{k,k'} = \phi_k - \phi_k'$.


\bibliographystyle{arxiv_no_month}
\bibliography{bibliofile}

\end{document}